\documentclass[
  prl,
  twocolumn,
  superscriptaddress,
  citeautoscript,
  showpacs,
  amsart,
  longbibliography
]{revtex4-1}

\usepackage{amsmath,bm,amsfonts}
\usepackage{physics,mathtools}
\usepackage{graphicx}
\usepackage[normalem]{ulem}
\usepackage{titlesec}
\usepackage[colorlinks=true,linkcolor=blue,citecolor=blue]{hyperref}

\newtheorem{thm}{Theorem}%[section]

\newtheorem{cor}[thm]{Corollary}

\newtheorem{rem}{Remark}%[section]

\usepackage[dvipsnames]{xcolor}

\newcommand{\cB}{\mathcal{B}}
\newcommand{\cC}{\mathcal{C}}
\newcommand{\cD}{\mathcal{D}}

\newcommand{\cH}{\mathcal{H}}
\newcommand{\cI}{\mathcal{I}}

\newcommand{\cK}{\mathcal{K}}
\newcommand{\cL}{\mathcal{L}}

\newcommand{\cX}{\mathcal{X}}

\newcommand{\cZ}{\mathcal{Z}}

\def\one{\boldsymbol{1}}

\begin{document}

\title{Stationary states of boundary driven quantum systems: some exact results}

\author{Eric A. Carlen }
\affiliation{Department of Mathematics, Rutgers University, Piscataway, NJ 08854-8019 USA}
\author{David A. Huse}
\affiliation{Department of Physics, Princeton University, Princeton, NJ 08544, USA}
\author{Joel L. Lebowitz}
\affiliation{Departments of Mathematics and Physics, Rutgers University, Piscataway, NJ 08854-8019 USA}

\date{\today}

\begin{abstract}
We study finite-dimensional open quantum systems whose density matrix $\rho$ evolves via a Lindbladian, $\dot{\rho}=-i[H,\rho]+\cD\rho$.  Here $H$ is the Hamiltonian of the 
system and $\cD$ is the dissipator.  We consider the case where the system consists of two parts, the ``boundary'' $A$ and the ``bulk'' $B$, and $\cD$ acts only on $A$, so $\cD=\cD_A\otimes\cI_B$, where $\cI_B$ is the identity superoperator on part $B$.  Let $\cD_A$ be ergodic, so $\cD_A\hat{\rho}_A=0$ only for one unique density matrix $\hat{\rho}_A$.  We show that any stationary density matrix $\bar{\rho}$ on the full system which commutes with $H$ must be of the product form $\bar{\rho}=\hat{\rho}_A\otimes\rho_B$ for some $\rho_B$.  This rules out finding any $\cD_A$ that has the Gibbs measure $\rho_\beta= e^{-\beta H}/Z(\beta)$ as a stationary state with $\beta\neq 0$, unless there is no interaction between parts $A$ and $B$.  We give criteria for the uniqueness of the stationary state $\bar{\rho}$ for systems with interactions between $A$ and $B$.  Related results for non-ergodic cases are also discussed.  
\end{abstract}

\maketitle

%%%%%%%%%%%%%%%%%%%%%%%%%%%%%%%%%%%%%%%%%%%%%%%%%%%%%%%%%%%%%%%%%

\section{Introduction}

There is much current interest, theoretical and experimental, in open quantum systems \cite{LPS}.  These are often quantum systems in contact with macroscopic equilibrium systems which act as thermal reservoirs.  Analyzing in detail the total system, including the reservoirs, is usually too difficult.  Since the quantities of interest are the time-evolution and stationary states %primarily in the behavior of the density matrix 
of the system, its interaction with the reservoirs is commonly idealized by saying that they cause the system to  %having it %.  This can be described as 
evolve %ing %, in an idealized form, 
under the influence of a stochastic quantum process. This leads to a Markovian master equation for %the time evolution of 
the system's density matrix $\rho(t)$.  The requirement of complete 
positivity of the evolution then restricts the form of this equation to a Lindbladian form: 
\begin{equation}\label{Lindblad1}
\frac{\partial\rho}{\partial t} = -i[H,\rho] +\cD\rho ~,  
\end{equation}
where $H$ is the Hamiltonian of the isolated system, and $\cD\rho$ is the “dissipation” caused by its interactions with the reservoirs.  By a theorem of Lindblad \cite{Lin76} and Gorini, Kossakowski and Sudarshan \cite{GKS76}, the generator $\cD$ has the form
\begin{equation}\label{Lindblad2}
\cD\rho= -i[K,\rho] +
\sum_{\alpha=1}^n(L_\alpha\rho L_\alpha^\dagger-\frac{1}{2}\{L_\alpha^\dagger L_\alpha,\rho\})~,
\end{equation}
where $K$ is a self adjoint operator, sometimes called the Lamb shift Hamiltonian. $K$ could be combined with $H$, but for our purposes, it is convenient to keep the effect of the reservoirs on the dynamics clearly separated from the dynamics of the isolated
system.

%It is important to note that there is no canonical separation of $\cD$ into parts that are ``unitary'' and ``dissipative''.   Indeed, for each $\alpha$, define $\widetilde{L}_\alpha := L_\alpha -z_\alpha\cI_\cH$ where $\alpha$ is a complex number and $\cI_\cH$ is the identity on $\cH$.  Then $L_\alpha = \widetilde{L}_\alpha$, and $\cD$ can be expressed in terms of the operators $K$, $\widetilde{L}$, and $\widetilde{L}_\alpha^\dagger$. Defining $G_\alpha$ and $H_\alpha$ to be the self adjoint operators such that $\bar{z}\widetilde{L}_\alpha = G_\alpha+ iH_\alpha$, simple computations yield
%\begin{equation}\label{Lindblad2V}
%\cD\rho= -i[\widetilde{K},\rho] + \sum_{\alpha=1}^n(\widetilde{L}_\alpha\rho \widetilde{L}_\alpha^\dagger -\frac{1}{2}\{\widetilde{L}^\dagger \widetilde{L},\rho\})~.
%\end{equation}
%where $\widetilde{K} = K +\sum_{\alpha =1}^n H_\alpha$. The generator $\cD$ is the same, but now the ``Hamiltonain part'' is different. 

We shall not discuss here the derivation of $K$ and the ``jump operators'' $\{L_\alpha\}$ from the interactions with the reservoirs via a weak coupling limit, as is done by Davies \cite{Da74}, or in the singular coupling limit as in Gorini and Kossakowski \cite{GorKos76}. (See \cite{Pal77} for a treatment of both limits in a common framework).
We shall focus instead on the relations between the properties of $\cD$ and $H$ and the stationary state(s) $\bar{\rho}$ of \eqref{Lindblad1}.  It follows from general results  that there is always at least one stationary $\bar{\rho}$. (See, e.g. \cite{D70} and Section~\ref{S4} of this paper.) The question of uniqueness of such steady states is closely connected with the existence of a strictly positive steady state
\cite{Frig78,yosh} and is discussed  below in Section~\ref{UNIQ}.

As discussed in \cite{LPS,TDKP} and references therein, there are two desiderata for $\cD$: (i) we would like $\cD$ to act only on the ``boundary" degrees of freedom of the system, as do the reservoirs in certain situations of interest, and (ii) we would like the stationary state $\bar{\rho}$ to be unique and to be that of thermal equilibrium at a finite temperature $1/\beta$ %and chemical potential 
set by the reservoir when the system is interacting with only one such reservoir, i.e. $\bar{\rho}=\frac{1}{Z(\beta)}\exp{(-\beta H)}$, or $\bar{\rho}=\frac{1}{Z(\beta, \mu)}\exp{(-\beta (H-\mu N))}$ if the reservoir also sets a chemical potential for total particle (or excitation) number $N$.  These two desired properties are readily realizable for classical systems \cite{rll} but seem incompatible for quantum systems in the particular cases investigated by \cite{LPS,TDKP}.  Here we prove a ``no go'' theorem showing that this incompatibility is indeed the case quite generally. However, infinite temperature ($\beta\rightarrow 0$) steady states of this form with $\beta\mu$ finite do occur (and are unique) for some such models when $[H,N]=0$, as we show.

\section{Local Lindbladians}

We shall consider a general setup where the system of interest can be divided into two parts $A$ and $B$.  The Lindbladian dissipator will couple only to part $A$, which we can consider to be the boundary of the system, while part $B$ is the bulk of the system.  For example, if our system is a finite spin chain, part $A$ could be the first $m$ spins at one or both ends of the chain (e.g. $m$ could be one or two spins), while part $B$ is all the remaining spins.  This type of ``local coupling'' has been discussed in some detail for various systems, such as the well known XXZ or XYZ spin chain models with %particular forms of 
$\cD$ acting on the spins at the ends of the chain \cite{LPS,yosh,TDKP,pros}.  A particular such spin chain example is discussed below.

The Hilbert space $\cH$ of the full system is the direct product of the Hilbert spaces $\cH_A$ and $\cH_B$, each assumed to be of finite dimension more than one.  The Hamiltonian of the %isolated 
system %, which determines its dynamics, 
is assumed to be finite and can be written as
\begin{equation}
H=H_A\otimes\one_B+H_{AB}+\one_A\otimes H_B~,
\end{equation}
where $H_{AB}$ denotes the interaction between subsystems $A$ and $B$ (without loss of generality, we require ${\rm Tr}_A\{H_{AB}\}={\rm Tr}_B\{H_{AB}\}=0$, defining $H_A$ and $H_B$ accordingly).  $H_{AB}=0$ then describes the dynamics of separately isolated $A$ and $B$ systems.  $\one_A$ ($\one_B$) is the identity operator on $A$ ($B$).  %If there is a nonzero Lamb shift Hamiltonian $K$, it acts only on $A$ because the reservoir couples only to $A$, thus $K$ contributes only to $H_A$.

In mathematical terms, we consider the case where the dissipator is of the form
\begin{equation}\label{dissform}
\cD=\cD_A\otimes\cI_B~,
\end{equation}
where the superoperator $\cD_A$ acts only on operators on subsystem $A$, and $\cI_B$ denotes the identity superoperator on subsystem $B$.  This means that the jump operators in \eqref{Lindblad2} all can be written as 
\begin{equation}\label{Lhat}
L_\alpha=\hat{L}_\alpha\otimes\one_B~,  
\end{equation}
in terms of $\{\hat{L}_\alpha\}$ that act only on $\cH_A$, and similarly the Lamb shift Hamiltonian can be written as 
\begin{equation}\label{Khat}
K=\hat{K}\otimes\one_B~.   
\end{equation}  
We assume further (for the moment) that $\cD_A$ is ergodic %and non-transient 
on subsystem $A$, i.e. that there exists a unique  density matrix $\hat\rho_A$ on $\cH_A$ such that
\begin{equation}\label{darhoa}
\cD_A\hat\rho_A=0~.
\end{equation}
%and this $\hat\rho_A$ is positive definite. 
(We also discuss the case where $\cD_A$ is non-ergodic in Appendix~\ref{NONERG} 
below.)

An example of such a $\cD$ is where one replaces at rate $1/\tau$ the density matrix $\rho(t)$ by $\hat\rho_A\otimes\rho_B(t)$, where $\hat{\rho}_A$ is a specified density matrix of subsystem $A$ and $\rho_B(t)=\tr_A[\rho(t)]$. That is,
\begin{equation}\label{SimpleEx}
    \cD\rho = \frac{1}{\tau}[\hat{\rho}_A\otimes\tr_A[\rho(t)]-\rho(t)]\ .
\end{equation}
(This dissipator can be written in the form \eqref{Lindblad2} and \eqref{dissform}, as we show in Appendix~\ref{LINFORM7}.)  
%This form \eqref{SimpleEx} of $\cD$ for a spin chain of $n$ sites with $A$ the spin at site $1$, or the spins at sites $1$ and $n$, is {\bf fix this.} considered in \cite{LPS}; an example is discussed below.  %There $\hat{\rho}_A$ is chosen to correspond to an ``equilibrium state'' of the density matrix of a single spin with a given reduced magnetic field $\tilde{h}=\beta h$ (or reduced chemical potential $\tilde\mu=\beta\mu$ if we consider the spin quanta as particles).  This can be thought of as being coupled to a reservoir at infinite temperature $\beta\rightarrow 0$, but with finite $\tilde{h}$ (finite $\tilde{\mu})$.  More generally, they study the case where $A$ consists of site $1$ and site $n$ of a chain of $n$ spins, with these end spins coupled to reservoirs with different $\tilde{h}$.

The question we now investigate is: Given a Lindbladian dissipator of the form \eqref{dissform} that acts only on part $A$ (the boundary) of our system, what can we say about a stationary density matrix of the full system, $\bar{\rho}$?  We first restrict our attention to cases where there is a steady state $\bar{\rho}$ that is a ``generalized Gibbs state'', meaning that it commutes with the Hamiltonian $H$: $[\bar{\rho},H]=0$.  This insures that $\bar{\rho}$ is also a stationary state of the system if we set $\cD=0$. %it is decoupled from the reservoir.  
It includes the standard Gibbs state $\rho_\beta=e^{-\beta H}/Z(\beta)$ as a special case.  We will show later that in some cases such a stationary state is unique.  Our main theorems and a corollary are:

%We prove below that if $\cD_A$ is ergodic on $A$ and $\bar{\rho}$ commutes with $H$, then $\bar{\rho}$ must be of the form 
%\begin{equation}\label{rhoVAVB}
%\bar{\rho}=\hat{\rho}_A\otimes\rho_B=\exp{[V_A+V_B]}~.
%\end{equation}

%It follows immediately that the Gibbs state $\rho_\beta$ with $\beta\neq 0$ can only be a stationary state if $H_{AB}=0$.  In that case $A$ and $B$ do not interact, so $\rho_B$ can be any density matrix that commutes with $H_B$: the dissipator relaxes $A$ to state $\hat{\rho}_A$, but $B$ remains isolated and autonomous, so has many possible steady states.

%\section{Formulation and proof of the main theorem}

\begin{thm}\label{main1}  
Let  $\cD$ %be a  quantum dynamical semigroup generator on $\cB(\cH_A\otimes\cH_B)$ where $\cD$ has 
have the form \eqref{dissform}.  We further assume that  $\cD_A$ is ergodic, %and non-transient, 
so on $\cH_A$ there is a unique density matrix $\hat\rho_A$ satisfying $\cD_A\hat{\rho}_A = 0$.  
Let $\overline{\rho}$ be a steady state solution of  \eqref{Lindblad1}.

Assume that $\bar{\rho}$ commutes with $H$.    %Assume that $\overline{\rho}$ commutes with $H$.  
Then there exists a density matrix $\rho_B$ on $\cH_B$ such that
\begin{equation}\label{rhoVAVB}
\overline{\rho} = \hat{\rho}_A\otimes \rho_B~.
\end{equation}
In particular, such a steady state $\overline{\rho}$ always satisfies $\tr_B\{\overline{\rho}\} = \hat{\rho}_A$. If we further assume that $\bar{\rho}$ is positive definite, then this implies that 
\begin{equation}\label{SEPZERO}
[\hat{\rho}_A,H_A]=0\ ,\quad [{\rho}_B,H_B]=0 \quad{\rm and}\quad [\bar{\rho},H_{AB}]=0\ .
\end{equation} 

 Conversely if $[\bar{\rho},H]\neq 0$ then $\bar{\rho}\neq\hat{\rho}_A\otimes\rho_B$ for any $\rho_B$.

\end{thm}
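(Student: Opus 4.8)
The plan is to build on the product form $\bar{\rho}=\hat{\rho}_A\otimes\rho_B$ already established in the first part of the theorem, and to convert this \emph{product} into an additive structure by taking logarithms — which is exactly where positive definiteness enters. Since $\bar{\rho}$ is positive definite, both factors $\hat{\rho}_A$ and $\rho_B$ are positive definite, so $G_A:=-\log\hat{\rho}_A$ and $G_B:=-\log\rho_B$ are well-defined self-adjoint operators and
\[
G:=-\log\bar{\rho}=G_A\otimes\one_B+\one_A\otimes G_B .
\]
Because $G$ is a continuous function of $\bar{\rho}$ while $\bar{\rho}=\exp(-G)$, the hypothesis $[\bar{\rho},H]=0$ is equivalent to $[G,H]=0$; for the same reason $[\hat{\rho}_A,H_A]=0\Leftrightarrow[G_A,H_A]=0$, $[\rho_B,H_B]=0\Leftrightarrow[G_B,H_B]=0$, and $[\bar{\rho},H_{AB}]=0\Leftrightarrow[G,H_{AB}]=0$. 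Thus it suffices to prove the three vanishing commutators with $G,G_A,G_B$ in place of $\bar{\rho},\hat{\rho}_A,\rho_B$.

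Expanding $[G,H]=0$ using $H=H_A\otimes\one_B+H_{AB}+\one_A\otimes H_B$, and noting that $G_A\otimes\one_B$ commutes with $\one_A\otimes H_B$ while $\one_A\otimes G_B$ commutes with $H_A\otimes\one_B$, yields the single operator identity
\[
[G_A,H_A]\otimes\one_B+\one_A\otimes[G_B,H_B]+[G,H_{AB}]=0 .
\]
I would then take partial traces. Writing $H_{AB}=\sum_k P_k\otimes Q_k$, the normalization $\tr_A\{H_{AB}\}=\tr_B\{H_{AB}\}=0$ reads $\sum_k (\tr P_k)\,Q_k=0$ and $\sum_k(\tr Q_k)\,P_k=0$. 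Applying $\tr_A$ to the identity, the first term drops out as the trace of a commutator on $A$, the middle term becomes $(\dim\cH_A)\,[G_B,H_B]$, and $\tr_A\{[G,H_{AB}]\}$ splits into $\tr_A\{[G_A\otimes\one_B,H_{AB}]\}=0$ (again a trace of a commutator on $A$) plus $\tr_A\{[\one_A\otimes G_B,H_{AB}]\}=[G_B,\tr_A\{H_{AB}\}]=0$. Hence $[G_B,H_B]=0$, and the mirror-image computation with $\tr_B$ gives $[G_A,H_A]=0$. Feeding both back into the displayed identity leaves $[G,H_{AB}]=0$, and re-exponentiating recovers \eqref{SEPZERO}.

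The only genuine subtlety — and the step I expect to be the crux — is the passage to $G$ at the outset. If one instead manipulates the product $\bar{\rho}=\hat{\rho}_A\otimes\rho_B$ directly, the analogue of the displayed identity carries $\hat{\rho}_A$- and $\rho_B$-weights on the interaction term, and its partial traces produce expressions such as $\sum_k\tr_A\{\hat{\rho}_A P_k\}\,[\rho_B,Q_k]$, which do \emph{not} vanish; the tracelessness of $H_{AB}$ is useless there. Replacing the product by the \emph{sum} $G_A\otimes\one_B+\one_A\otimes G_B$ is precisely what makes the cross terms of $[G,H_{AB}]$ reduce to \emph{unweighted} partial traces of $H_{AB}$, so that $\tr_A\{H_{AB}\}=\tr_B\{H_{AB}\}=0$ can be invoked. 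This is the sense in which positive definiteness is indispensable to the conclusion.
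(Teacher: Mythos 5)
There is a genuine gap: you never prove the product form \eqref{rhoVAVB}, which is the central claim of the theorem. You open by saying you will ``build on the product form $\bar{\rho}=\hat{\rho}_A\otimes\rho_B$ already established in the first part of the theorem,'' but that product form is a \emph{conclusion} of the theorem, not a hypothesis, and it is exactly where the ergodicity of $\cD_A$ and the locality of the dissipator do their work. The required argument is: since $[\bar\rho,H]=0$ and $\bar\rho$ is stationary, $\cD\bar\rho=0$; then, because $\cD=\cD_A\otimes\cI_B$, one must show that the kernel of $\cD_A\otimes\cI_B$ consists precisely of operators of the form $\hat\rho_A\otimes W$. The paper does this by expanding $\bar\rho=\sum_j X_j\otimes W_j$ in an orthonormal basis $\{X_j\}$ of eigenvectors of $\cD_A^\dagger\cD_A$ (whose kernel equals the kernel of $\cD_A$ and is spanned by $\hat\rho_A$ by ergodicity), and pairing the identity $0=\sum_{j\ge 2}(\cD_A X_j)\otimes W_j$ against suitable dual operators $Y_k$ to conclude $W_k=0$ for all $k\ge 2$. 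Without some such step your proposal proves only the last sentence of the theorem. The converse direction and the identity $\tr_B\bar\rho=\hat\rho_A$ are also unaddressed, though those are one-line consequences.

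For the part you do prove, \eqref{SEPZERO}, your argument is correct and is essentially the paper's: pass to $\log\bar\rho$ (up to an immaterial sign), expand $[\log\bar\rho,H]=0$ into the three-term identity, kill two of the terms by partial tracing using $\tr_A\{H_{AB}\}=\tr_B\{H_{AB}\}=0$, and re-exponentiate. Your closing remark on why the additive structure of the logarithm (rather than the multiplicative structure of $\bar\rho$ itself) is what makes the tracelessness convention usable is accurate and is indeed the role positive definiteness plays here; but it does not substitute for the missing proof of \eqref{rhoVAVB}.
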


\begin{cor}\label{GibbsCor} 
Under the same assumptions made in Theorem~\ref{main1}, suppose that for some finite $\beta >0$, the Gibbs state
\begin{equation}\label{GIBBS}
\rho_\beta = \frac{1}{Z(\beta)}e^{-\beta H}\ 
\end{equation} 
is a steady state solution of \eqref{Lindblad1}. Then necessarily $H_{AB} =0$,
and for any density matrix $\rho_B$ on $\cH_B$ such that $[\rho_B,H_B]= 0$, $\hat{\rho}_A\otimes \rho_B$ is a steady state solution of \eqref{Lindblad1}. %In particular, the Gibbs state is not the unique steady state. %when $H_B$ is non-trivial.   If $H_{AB}=0$, then 
The dissipator relaxes $A$ to its unique steady state $\hat{\rho}_A$.  However, due to $H_{AB}=0$, subsystem $B$ remains isolated and autonomous so has many possible steady states.
\end{cor}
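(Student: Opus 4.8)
The plan is to feed the Gibbs state into Theorem~\ref{main1} and then read off the structure of $H$ from the additivity of the logarithm over a tensor product. The Gibbs state $\rho_\beta = e^{-\beta H}/Z(\beta)$ is a function of $H$, so $[\rho_\beta,H]=0$, and it is positive definite since $e^{-\beta H}$ is. It therefore meets every hypothesis of Theorem~\ref{main1} (ergodicity of $\cD_A$ is assumed, and $\rho_\beta$ both commutes with $H$ and is positive definite). Theorem~\ref{main1} then hands us a density matrix $\rho_B$ with $\rho_\beta = \hat{\rho}_A\otimes\rho_B$, together with $[\hat{\rho}_A,H_A]=0$ and $[\rho_B,H_B]=0$. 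Because $\rho_\beta$ is positive definite and factorizes, both $\hat{\rho}_A$ and $\rho_B$ are strictly positive.

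The decisive step is to show $H_{AB}=0$. Taking logarithms in $\rho_\beta = \hat{\rho}_A\otimes\rho_B$ and using $\log(P\otimes Q)=\log P\otimes\one_B + \one_A\otimes\log Q$ for positive-definite $P,Q$, I obtain
\begin{equation}
-\beta H - \log Z(\beta)\,\one = \log\hat{\rho}_A\otimes\one_B + \one_A\otimes\log\rho_B~,
\end{equation}
so that $H = \tilde{H}_A\otimes\one_B + \one_A\otimes\tilde{H}_B$ for self-adjoint $\tilde{H}_A,\tilde{H}_B$ (the scalar $\log Z(\beta)$ absorbed into one factor). Expanding in a Hermitian product basis $\{G_i^A\otimes G_j^B\}$ with $G_0^A\propto\one_A$, $G_0^B\propto\one_B$ and all other basis elements traceless, any operator of the form $\tilde{H}_A\otimes\one_B + \one_A\otimes\tilde{H}_B$ has no component in which both indices are nonzero. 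Since the paper's interaction term $H_{AB}$ is exactly the sum of these doubly-nonzero components --- it is singled out by $\tr_A H_{AB}=\tr_B H_{AB}=0$ --- this forces $H_{AB}=0$.

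For the converse assertion I would simply verify stationarity of $\hat{\rho}_A\otimes\rho_B'$ for an arbitrary density matrix $\rho_B'$ with $[\rho_B',H_B]=0$. With $H_{AB}=0$ we have $H=H_A\otimes\one_B+\one_A\otimes H_B$, so the Hamiltonian commutator splits as $[H,\hat{\rho}_A\otimes\rho_B'] = [H_A,\hat{\rho}_A]\otimes\rho_B' + \hat{\rho}_A\otimes[H_B,\rho_B']$, and both pieces vanish by $[\hat{\rho}_A,H_A]=0$ (from Theorem~\ref{main1}) and the hypothesis $[\rho_B',H_B]=0$. The dissipative part vanishes because $\cD(\hat{\rho}_A\otimes\rho_B') = (\cD_A\hat{\rho}_A)\otimes\rho_B' = 0$. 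Hence $-i[H,\hat{\rho}_A\otimes\rho_B']+\cD(\hat{\rho}_A\otimes\rho_B')=0$, and the final two sentences of the statement are interpretive remarks requiring no further argument.

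I expect the logarithm step to be the only delicate point: it is legitimate precisely because positive-definiteness of $\rho_\beta$ (inherited from $e^{-\beta H}$) makes both tensor factors strictly positive, so $\log\hat{\rho}_A$, $\log\rho_B$, and the additivity identity are all well-defined. Had $\rho_\beta$ been merely positive semidefinite this route would break down; the remaining work --- the basis bookkeeping isolating $H_{AB}$ and the direct stationarity check --- is routine.
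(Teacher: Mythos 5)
Your proof is correct and follows essentially the same route as the paper's (much terser) one-line argument: apply Theorem~\ref{main1} to the positive-definite Gibbs state, then observe that $e^{-\beta H}$ factorizes as $\hat{\rho}_A\otimes\rho_B$ precisely when $H$ has no doubly-traceless interaction part under the convention $\tr_A\{H_{AB}\}=\tr_B\{H_{AB}\}=0$, i.e.\ when $H_{AB}=0$. Your logarithm step and the direct stationarity check for the converse are exactly the details the paper leaves implicit.
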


%We also now state a uniqueness condition that applies in our context of partial ergodicity. It is proved in the appendix.

We next turn to uniqueness and positivity. As already noted, the uniqueness of 
steady states is closely connected with the existence of 
positive definite steady states, as we recall below. In our setting, 
in which the Lindbladian acting at the boundary has a unique 
strictly positive steady state, a simple algebraic condition on 
the Hamiltonian $H$  is necessary and sufficient for any steady 
state $\bar\rho$ that commutes with $H$ to be the unique steady state, and in this case $\bar\rho$ is necessarily positive definite. 

%Next we state a general theorem due to Frigerio \cite{Frig78} about the uniqueness of the stationary state $\bar{\rho}$ of \eqref{Lindblad1}; see also \cite{yosh,zb} and references therein. %This is well known in the mathematical literature [ref?], although we do not know of a reference for an elementary proof in our finite-dimensional context. 
 %Therefore, we provide such a proof for the readers convenience. 

%\begin{thm}[Frigerio's Theorem]\label{Appthm} Suppose that \eqref{Lindblad1} has a $\cD\rho$ as in \eqref{Lindblad2},  and has at least one strictly positive steady state.
%Let
%\begin{equation}\label{Lindblad22}
%\cD\rho= %-i[K,\rho] + \sum_{\alpha=1}^n(L^{\dagger}_\alpha\rho L_\alpha-\frac{1}{2}\{L_\alpha L_\alpha^\dagger,\rho\})~
%\end{equation}
%be a Lindblad representation of $\cD$.
%Then there is a unique steady state density matrix $\overline{\rho}$ of \eqref{Lindblad1} if and only if
 %$H$ and $\{L_1,\dots,L_n\}$ are such that any operator $X$ that satisfies
%\begin{equation}\label{CommCond}
%[H,X]= 0 \quad {\rm and\ for \ all}\quad  \alpha, \quad [L_\alpha,X] = [L_\alpha^\dagger,X] = 0\ 
%\end{equation}
 %is a multiple of the identity.  This is equivalent to saying that $\%{H,L_\alpha,L_\alpha^\dagger\}$ generate all operators on $\cH$. 
%\end{thm}

%Considering our case where the dissipator acts only on the boundary, we obtain the following theorem: 

\begin{thm}\label{UNIQUECOR}
    Let $\cD$ be a Lindbladian dissipator of the form \eqref{Lindblad2} and \eqref{dissform}
%\begin{equation}\label{OURFORM}
%\cL\rho = -i[H,\rho] + (\cD_A\otimes\one_B)
%\end{equation}
    and suppose that $\cD_A$ is ergodic with a positive definite steady state $\hat \rho_A$ on $\cH_A$. %, and has a Lindblad form given by \eqref{Lindblad2}. 
    Suppose that $\bar\rho$ is a steady state for 
    \eqref{Lindblad1} that commutes with $H$.
    
    Then  $\bar\rho$ is the unique steady state of \eqref{Lindblad1}  if and only if the only traceless self-adjoint operator $X_B$ acting on $\cH_B$ such that $[H,\one_A\otimes X_B] = 0$ is $X_B = 0$, and moreover, in this case $\bar\rho$ is positive definite.

%If $\bar\rho$ is a steady state that commutes with $H$, but is not positive definite, then the steady state is not unique. Hence, whenever, there is a unique steady state $\bar\rho$ that commutes with $H$, it is necessarily positive definite. 
    
Furthermore, if there is a traceless %conserved 
operator $X_B$ on $\cH_B$ so that $[H,\one_A\otimes X_B]=0$ and thus the steady state is not unique, this also implies $[H_B, X_B]=0$ and $[H_{AB},\one_A\otimes X_B]=0$.
    \end{thm}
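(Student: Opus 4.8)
The plan is to reduce the uniqueness question to the triviality of the algebra of conserved observables and then translate that into the stated commutator condition on $H$. It is convenient to absorb the Lamb shift into $H' = H + \hat{K}\otimes\one_B$, so that $\liou\rho = -i[H',\rho] + \sum_\alpha\big(L_\alpha\rho L_\alpha^\dagger - \tfrac12\{L_\alpha^\dagger L_\alpha,\rho\}\big)$ with $L_\alpha = \hat{L}_\alpha\otimes\one_B$. I will use three standard facts about Lindbladians on a finite-dimensional space: (i) $\dim\ker\liou = \dim\ker\liou^\dagger$, so uniqueness of the steady state is equivalent to $\cF := \ker\liou^\dagger = \C\one$; (ii) if $\liou$ has a faithful steady state then $\cF = \{H',L_\alpha,L_\alpha^\dagger\}'$; and (iii) ergodicity of $\cD_A$ with faithful $\hat{\rho}_A$ is equivalent to $\{\hat{K},\hat{L}_\alpha,\hat{L}_\alpha^\dagger\}' = \C\one_A$ on $\cH_A$. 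The easy half of the equivalence is immediate: for any $X_B$ on $\cH_B$, trace-preservation of $\cD_A$ gives $\cD_A^\dagger\one_A = 0$ and $[\hat{K},\one_A]=0$, so $\liou^\dagger(\one_A\otimes X_B) = i[H,\one_A\otimes X_B]$; thus $\one_A\otimes X_B$ is conserved exactly when $[H,\one_A\otimes X_B]=0$. A nonzero traceless self-adjoint such $X_B$ would then be a non-scalar element of $\cF$, so the steady state could not be unique. This proves that uniqueness forces the forbidden set to be trivial.

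For the converse and for positivity I first show the hypothesis makes $\bar\rho$ faithful. By Theorem~\ref{main1}, $\bar\rho = \hat{\rho}_A\otimes\rho_B$ with $\hat{\rho}_A>0$. If $\rho_B$ were not faithful, the kernel projection of $\bar\rho$ would be $\one_A\otimes Q$ with $Q\neq0$ the projection onto $\ker\rho_B$; since $[\bar\rho,H]=0$, $H$ commutes with every spectral projection of $\bar\rho$, so $[H,\one_A\otimes Q]=0$, and the traceless part $X_B = Q - \tfrac{\tr Q}{\dim\cH_B}\one_B$ is a nonzero element of the forbidden set, a contradiction. Hence $\bar\rho>0$, which settles the ``moreover'' clause, and by fact (ii) it now suffices to prove $\cF = \C\one$. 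Given a self-adjoint $Y\in\cF$ (it is enough to treat these, as $\cF$ is $*$-closed), the standard dissipation identity
\[
\liou^\dagger(Y^2) = \sum_\alpha [Y,L_\alpha]^\dagger[Y,L_\alpha] \geq 0,
\]
paired with $\tr\big(\bar\rho\,\liou^\dagger(Y^2)\big) = \tr\big((\liou\bar\rho)Y^2\big) = 0$ and faithfulness of $\bar\rho$, forces $[Y,L_\alpha]=0$ for every $\alpha$.

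The remaining step, and the one I expect to be the main obstacle, is to upgrade $[Y,L_\alpha]=0$ to the product form $Y = \one_A\otimes X_B$. Commuting with the $\hat{L}_\alpha\otimes\one_B$ only places $Y$ in $\cN_A\otimes\linops{\cH_B}$ with $\cN_A = \{\hat{L}_\alpha,\hat{L}_\alpha^\dagger\}'$, and ergodicity controls not $\cN_A$ itself but only $\cN_A\cap\{\hat{K}\}' = \C\one_A$; so one must still exploit the surviving relation $[H',Y]=0$ to remove the part of $Y$ with nontrivial $A$-component. My intended route is to use the $\hat{\rho}_A$-conditional expectation $\Phi = P_A\otimes\cI_B$, where $P_A(Z) = \tr(\hat{\rho}_A Z)\one_A$ is the ergodic projection onto $\ker\cD_A^\dagger$, so that $\Phi(Y) = \one_A\otimes\tr_A\big((\hat{\rho}_A\otimes\one_B)Y\big)$; the technical heart is to show that $\Phi$ maps $\cF$ into itself and restricts to the identity there, which collapses the $A$-factor and is exactly where ergodicity must be spent. (If a direct argument stalls, this is the point at which to invoke the structure theory of the fixed-point algebra of Frigerio and of Baumgartner--Narnhofer.) Once $Y=\one_A\otimes X_B$, the first reduction gives $[H,\one_A\otimes X_B]=0$, so the traceless part of $X_B$ lies in the forbidden set and vanishes; hence $Y\in\C\one$, $\cF=\C\one$, and $\bar\rho$ is the unique steady state.

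The final assertion is then purely algebraic. Writing $H = H_A\otimes\one_B + H_{AB} + \one_A\otimes H_B$, the summand $H_A\otimes\one_B$ commutes with $\one_A\otimes X_B$, so $[H,\one_A\otimes X_B]=0$ reduces to $[H_{AB},\one_A\otimes X_B] + \one_A\otimes[H_B,X_B] = 0$. Applying $\tr_A$ and using the centering convention $\tr_A H_{AB}=0$, the first term vanishes (after the partial trace it equals $[\tr_A H_{AB},X_B]$), leaving $(\dim\cH_A)\,[H_B,X_B]=0$, i.e. $[H_B,\one_A\otimes X_B]=0$; substituting this back gives $[H_{AB},\one_A\otimes X_B]=0$. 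This step uses only the normalization of $H_{AB}$ and is entirely self-contained.
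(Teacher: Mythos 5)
Your architecture is the paper's: the easy direction via the observation that $\one_A\otimes X_B$ is a fixed point of $\cL^\dagger$ precisely when $[H,\one_A\otimes X_B]=0$, faithfulness of $\bar\rho$ via the spectral projection onto $\ker\rho_B$, Frigerio-type control of the fixed-point algebra $\cF$ once a faithful steady state is in hand, and the partial-trace computation for the final commutator identities. Those parts are correct. The genuine gap is exactly the step you flag yourself: upgrading $[Y,L_\alpha]=[Y,L_\alpha^\dagger]=0$ together with $[H+\hat K\otimes\one_B,\,Y]=0$ to $Y=\one_A\otimes X_B$. You correctly observe that ergodicity of $\cD_A$ only trivializes $\{\hat K,\hat L_\alpha,\hat L_\alpha^\dagger\}'$, whereas an element of $\cF$ commutes only with $\hat K\otimes\one_B$ \emph{added to} $H$; but your proposed resolution is left unproved (the ``technical heart,'' that the conditional expectation $\Phi$ restricts to the identity on $\cF$), and it is in fact false: $\cF$ can contain elements not of the form $\one_A\otimes X_B$, so no argument can derive your intermediate lemma from the hypotheses you list at that point. (For what it is worth, the paper's own proof is thin at the very same spot: it deduces $[\hat K\otimes\one_B,A]=0$ where only $[K+H,A]=0$ is available.)

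The obstruction is concrete. Take $\cH_A=\cH_B=\C^2$, one jump operator $\hat L=\sigma_z$ and Lamb shift $\hat K=\sigma_x$, so $\cD_A\rho=-i[\sigma_x,\rho]+\sigma_z\rho\sigma_z-\rho$ is ergodic with faithful steady state $\hat\rho_A=\tfrac12\one_A$, and let
\[
H \;=\; -\,\sigma_x\otimes\one_B\;+\;\sigma_x\otimes\sigma_x\;+\;\sigma_z\otimes\sigma_z\,,
\]
so that $H_A=-\hat K$ cancels the Lamb shift. All hypotheses of the theorem hold: $\bar\rho=\tfrac14\one$ is a steady state commuting with $H$, and the only traceless self-adjoint $X_B$ with $[H,\one_A\otimes X_B]=0$ is $X_B=0$, since such an $X_B$ must commute with both $\sigma_x$ and $\sigma_z$. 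Yet $K+H=\sigma_x\otimes\sigma_x+\sigma_z\otimes\sigma_z$ and $L=\sigma_z\otimes\one_B$ both commute with $Y=\sigma_z\otimes\sigma_z$, so $Y\in\cF$ is a nonscalar fixed point that is not of the form $\one_A\otimes X_B$, and indeed $\tfrac14(\one+c\,\sigma_z\otimes\sigma_z)$ is a steady state for every $c\in[-1,1]$. So the ``if'' direction cannot be completed as you (or the paper) set it up whenever $\hat K\neq0$: the commutation condition must be placed on the full Hamiltonian part $H+K$ of the generator (or $\hat K$ must be absorbed into $H$, or assumed zero) for the equivalence to hold. Your converse direction, the faithfulness argument, and the final algebraic assertion survive intact; the missing step is not a technicality to be filled in but the point at which the statement itself needs repair.
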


The proofs will follow below, after the following discussion:

%One might imagine that in the context of Theorem~\ref{main1}, the condition $[H,\overline{\rho}] =0$ means that the interaction described by $H_{AB}$ is not effective enough to ensue that \eqref{Lindblad1} has a unique steady state solution. However, this need not be the case: Another possibility is that $\hat{\rho}_A=e^{V_A}$ and $\rho_B=e^{V_B}$ so that
There are many examples, one of which we discuss below, for which both \eqref{rhoVAVB} and \eqref{SEPZERO} hold.  These include cases for which
\begin{equation}\label{rhov}
\bar{\rho}=e^V=e^{V_A}\otimes e^{V_B}~,
\end{equation}
%where $V=V_A\otimes\cI_B+\cI_A\otimes V_B$ (which commutes with $H$) is a finite operator that is not a sum of a multiple of $H$ and a multiple of the identity operator.  %The requirement that $\bar{\rho}$ commutes with $H$ Eqn. \eqref{SEPZERO} then implies that $[V_A,H_A]=0$, $[V_B,H_B]=0$, and $[V,H_{AB}]=0$.  include cases 
where $V=\lambda N + a\one$, where $N=N_A\otimes\one_B+\one_A\otimes N_B$ is the number of ``particles'' in the system, which is conserved by $H$ but is not conserved by $\cD$. $\one=\one_A\otimes\one_B$ is the identity operator on the full system, $V_A=\lambda N_A +a\one_A$ and $V_B=\lambda N_B +a\one_B$.  These ``particles'' can be the $z$ component of the magnetization for the spin chain examples \cite{LPS,TDKP}.  In \cite{AMTB} they are the excitations in a system of oscillators.  In these cases we have $\bar{\rho}\sim e^{\lambda N}$, which is the equilibrium Gibbs ensemble in the limit of infinite temperature, where the reduced chemical potential $\lambda=\beta\mu$ remains finite in the limit.  In order for $\bar{\rho}$ to be the unique steady state in these cases, we need $V_B$ to be unique, which requires that $H_{AB}\neq 0$ and that the only operator acting on $B$ that commutes with $H$ is the identity, as in Theorem \ref{UNIQUECOR}. Thus  $[H_{AB},\one_A\otimes V_B]\neq 0$: $H_{AB}$ moves particles between $A$ and $B$.

\noindent{Proof of Theorem~\ref{main1}}: Suppose that $\overline{\rho}$ is a steady state of \eqref{Lindblad1}.  Then since $ [H,\overline{\rho}]=0$, 
$\cD\overline{\rho} = 0$. %Let $\widehat{\cH}_A$ denote $\cB(\cH_A)$ equipped with the Hilbert-Schmidt inner product. 

Let $\{X_1,\dots,X_M\}$ be a complete orthonormal basis of the operators acting on ${\cH}_A$ (so that $M= d^2$ if $d$ is the dimension of $\cH_A$) consisting of
eigenvectors of $\cD_A^\dagger\cD_A$.  Since $\cD_A$ is ergodic, the nullspace of $\cD_A$ is spanned by $\hat{\rho}_A$.   Since $\langle \rho_A|\cD_A^{\dagger}\cD_A \rho_A\rangle=\langle\cD_A \rho_A|\cD_A \rho_A\rangle$ for any $\rho_A$, $\hat{\rho}_A$ also spans the nullspace of $\cD_A^\dagger\cD_A$.  We take its normalization in ${\cH}_A$ to be the first element, $X_1$, of our orthonormal basis.
Then for all $j > 1$, $\cD_A^\dagger\cD_A X_j = \sigma_j^2 X_j$ with $\sigma_j^2 > 0$. 

Then $\overline{\rho}$ has the expansion
\begin{equation}
\overline{\rho} = \sum_{j=1}^M X_j\otimes W_j\ ,
\end{equation}
where each $W_j$ acts on $\cH_B$.  Since $\cD\overline{\rho} = 0$ and $\cD=\cD_A\otimes\cI_B$ we have 
\begin{equation}
0 = \sum_{j=2}^M (\cD_A X_j)\otimes W_j\ .
\end{equation}
For $j>1$,  define $Y_j := \sigma_j^{-2}\cD_A X_j$. Note that 
\begin{eqnarray*}
\tr_{A}[Y_j^\dagger (\cD_A X_k)] &=& \langle  \sigma_j^{-2} \cD_A X_j,\cD_A X_k\rangle\\
&=& \sigma_j^{-2}\langle \cD_A^\dagger\cD_A X_j,X_k\rangle = \delta_{j,k}\ .
\end{eqnarray*}
Therefore, for each $k>1$,
\begin{equation}
0 = \sum_{j=2}^M \tr_{A}[(Y_k^\dagger\otimes\one_B)((\cD_A X_j)\otimes W_j)] =  W_k\ .
\end{equation}
The conclusion is that for some normalization constant $c$ 
\begin{equation}
\overline{\rho}=  c\hat{\rho}_A \otimes W_1
\end{equation}
where $\tr_B[cW_1] =1$ and $cW_1\geq 0$. Defining ${\rho}_B := cW_1$, we see that $\overline{\rho} = \hat{\rho}_A\otimes {\rho}_B$, which proves \eqref{rhoVAVB}.  Note that the assumption that $\bar{\rho}$ (and thus also $\hat{\rho}_A$) is positive definite was not used yet, so this part of the theorem (unlike \eqref{SEPZERO}) is also true even if $\bar{\rho}$ has null eigenvectors, as can occur when $\cD$ is approximating a zero-temperature bath. 

On the other hand, if $\bar{\rho}=\hat{\rho}_A\otimes\rho_B$ then using \eqref{darhoa} the r.h.s. of \eqref{Lindblad1} is just $-i[H,\bar{\rho}]$ which would have to be zero if $\bar{\rho}$ is stationary.

%{\bf Make a much more compact proof of (8) assuming that $\bar{\rho}$ is positive definite:}

%All of the spectral projections of $\overline{\rho}$ are polynomials in $\overline{\rho}$. Therefore, since $[\overline{\rho},H] =0$, $H$ commutes with all of the spectral projections of $\overline{\rho}$.  Note that $\overline{\rho} =\hat{\rho}_A\otimes {\rho}_B$ is not positive definite if and only if ${\rho}_B$ is not positive definite, and if $P_B$ denotes the orthogonal projection onto the range of ${\rho}_B$, then $P := \cI_A\otimes P_B$ is the orthogonal projection onto the range of $\overline{\rho}$, and hence $[H,P] =0$. Since clearly $[H_A,P] =0$, $[H_B,P] + [H_{AB},P] =0$. Taking $\tr_A$ of both sides,
%$$
%{\rm dim}(\cH_A)[H_B,P] = - \tr_A([H_{AB},\cI_A\otimes P_B]) = 0
%$$
%since $\tr_A[H_{AB}] =0$. This shows that
%\begin{equation}\label{SEPEQPRJ}
%[H_A,P] =0\ ,\quad [H_B,P] = 0\quad{\rm and}\quad [H_{AB},P] =0\ .    
%\end{equation}

%Define $V_A = \log \hat{\rho}_A$. Since $\cI_A\otimes {\rho}_B$ is strictly positive on the range of $P$, there is a self adjoint operator $V_B$ on $\cH_B$ (which we identify with $\cI_A\otimes V_B$ on $\cH$) such that $V_B =PV_BP$   and $\hat{\rho}_A \otimes {\rho}_B = e^{V_A+V_B} - P^\perp$.  Then since $H$ commutes with $P^\perp$, $H$ commutes with $e^{V_A+V_B}$, and hence with $\log(e^{V_A+V_B}) = V_A+V_B$. 

Now, to prove \eqref{SEPZERO}, we assume that $\bar{\rho}$ is positive definite, so we can define $V_A=\log{\hat{\rho}_A}$ and $V_B=\log{\rho_B}$. Since $H$ commutes with $\bar{\rho}=e^{V_A}\otimes e^{V_B}$, it commutes with $\log{\bar{\rho}} = V_A\otimes\one_B+\one_A\otimes V_B=V$.  Therefore
\begin{eqnarray}
0 &=& [V_A\otimes\one_B + \one_A\otimes V_B, H_A\otimes\one_B+ \one_A\otimes H_B + H_{AB}] \nonumber\\
&=&[V_A,H_A]\otimes\one_B + \one_A\otimes [V_B, H_B]\nonumber \\
&+& [V_A\otimes\one_B+\one_A\otimes V_B,H_{AB}]\ .\label{THREETERMS}
\end{eqnarray}
Apply the partial trace $\tr_B$ to each term on the right side.  First,
\begin{equation}
\tr_B\{[V_A,H_A]\otimes\one_B\} = [V_A,H_A]~{\rm dim}(\cH_B) \ .
\end{equation}
We claim that $\tr_B$ of all other terms in \eqref{THREETERMS} are zero so that $[V_A,H_A] =0$. To see this, 
$\tr_B\{\one_A\otimes[V_B,H_B]\}= 0$ since it is the trace of a commutator on $\cH_B$. Next, $\tr_B\{[V_A\otimes\one_B,H_{AB}]\} = [V_A,\tr_B\{H_{AB}\}] =0$ by our convention that $\tr_B\{H_{AB}\} = 0$.
Finally, $\tr_B\{[\one_A\otimes V_B,H_{AB}]\}= 0$ by the partial cyclicity of the partial trace; that is, $\tr_B\{V_B H_{AB}\} = \tr_B\{H_{AB}V_B\}$.
This proves that $[V_A,H_A] =0$, and the same reasoning using instead $\tr_A$ shows that 
$[V_B,H_B]= 0$. Then \eqref{THREETERMS} simplifies to 
$[V,H_{AB}]= 0$.  For each of these vanishing commutators, we then use the fact that $[C,D]=0$ implies $[e^C,D]=0$ for any two operators $C$, $D$, to prove \eqref{SEPZERO}.  %Since clearly $[H_A,$ Since $[V_A,H_A] =0$, $[\hat{\rho}_A,H_A]=0$. Since $\rho_B = e^V_B -P^\perp$, $[{\rho}_B,H_B]=[P^\perp,H_B]=0$ by \eqref{SEPEQPRJ}.
QED

\noindent{Proof of Corollary~\ref{GibbsCor}}:  By \eqref{GIBBS}, $\rho_\beta$ is
%\begin{equation}
%\log \rho_\beta = -\beta [H_A + H_B + H_{AB}] -\log Z\ ,
%\end{equation}
%which is not 
of the form \eqref{rhoVAVB} if and only if $H_{AB}=0$, given that $\tr_A\{H_{AB}\}=0$ and $\tr_B\{H_{AB}\}=0$.
QED

%By Theorem~\ref{main1}, applied with $\overline{\rho} = \rho_\beta$,
%\begin{equation}
%\log \rho_\beta = V_A + V_B\ .
%\end{equation}
%Therefore
%\begin{equation}
%H_{AB} = -(H_A + \frac1\beta V_A) -  (H_B + \frac1\beta V_B)   +\log Z\ .
%\end{equation}
%Then $\tr_A[H_{AB}]= 0$ becomes
%\begin{equation}
 %   (H_B + \frac1\beta V_B) =  \log Z -    \frac{1}{{\rm dim}(\cH_A)}\tr[(H_A + \frac1\beta V_A)]\ .
%\end{equation}
%Likewise,
%\begin{equation}
 %   (H_A + \frac1\beta V_A) =  \log Z -\frac{1}{{\rm dim}(\cH_B)}\tr[(H_B + \frac1\beta V_B)]\ .
%\end{equation}
%Therefore, $H_{AB}$ is a constant multiple of the identity, and is therefore zero by the convention that $\tr_A[H_{AB}]$ and $\tr_B[H_{AB}]$ both vanish.  The final statement is now clear. 
%\end{proof}

%The proof of Theorem~\ref{Appthm} is given below in Section~\ref{UNIQ}.

Before proving Theorem~\ref{UNIQUECOR} we recall a theorem of Frigerio \cite{Frig78} that we will use: %Recall the $\cD$, as defined in \eqref{Lindblad2} specifies the general Linbladian generator acting on operators on $\cH$. Frigerio's Theorem gives a criteion for the the uniqueness of steady states for the corresponding Lindblad equation.

\begin{thm}[Frigerio's Theorem]\label{Appthm} Suppose that the equation $\frac{d\rho}{dt}=\cD\rho$ for density %\eqref{GKSL} 
matrices on $\cH$ with $\cD$ given by \eqref{Lindblad2} %is true for at least one strictly positive density matrix $\rho$ on $\cH$.
%Suppose that \eqref{Lindblad1} has $H=0$, a $\cD\rho$ as in \eqref{Lindblad2}  and 
has at least one strictly positive steady state.
%Let
%\begin{equation}\label{Lindblad22}
%\cD\rho= %-i[K,\rho] + \sum_{\alpha=1}^n(L^{\dagger}_\alpha\rho L_\alpha-\frac{1}{2}\{L_\alpha L_\alpha^\dagger,\rho\})~
%\end{equation}
%be a Lindblad representation of $\cD$.
Then there is a unique steady state density matrix 
$\bar{\rho}$ with $\cD\bar{\rho}=0$ %is the unique density matrix with $\cD\rho=0$of \eqref{Lindblad1} 
if and only if $K$ and $\{L_1,\dots,L_n\}$ are such that any operator $X$ on $\cH$ that satisfies
\begin{equation}\label{CommCond}
[K,X]= 0 \quad {\rm and\ for \ all}\quad  \alpha, \quad [L_\alpha,X] = [L_\alpha^\dagger,X] = 0\ 
\end{equation}
 is a multiple of the identity.  This is equivalent to saying that $\{K,L_\alpha,L_\alpha^\dagger\}$ generate all operators on $\cH$. 
\end{thm}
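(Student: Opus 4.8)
The plan is to pass to the Heisenberg (dual) picture and reduce the uniqueness question to the triviality of the fixed-point set of the adjoint generator. Write $\cD^\dagger$ for the adjoint of $\cD$ with respect to the Hilbert--Schmidt inner product $\langle X,Y\rangle=\tr[X^\dagger Y]$, so that
\begin{equation*}
\cD^\dagger(X) = i[K,X] + \sum_\alpha\Big(L_\alpha^\dagger X L_\alpha - \tfrac12\{L_\alpha^\dagger L_\alpha, X\}\Big)\ .
\end{equation*}
Since $\cD$ acts on the finite-dimensional space of operators on $\cH$, the rank--nullity theorem together with $\operatorname{rank}\cD=\operatorname{rank}\cD^\dagger$ gives $\dim\ker\cD=\dim\ker\cD^\dagger$; and trace preservation of the semigroup means $\cD^\dagger(\one)=0$, so $\one\in\ker\cD^\dagger$ always. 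First I would record the elementary fact that, given at least one steady state, uniqueness of the steady state is equivalent to $\dim\ker\cD=1$: if $\dim\ker\cD=1$ the trace-one element is forced to be the given $\bar\rho$, while if $\dim\ker\cD\geq 2$ one extracts a traceless self-adjoint $Y\in\ker\cD$ (the kernel is $\dagger$-stable since $\cD(X)^\dagger=\cD(X^\dagger)$) and forms the two distinct steady states $\bar\rho\pm\e Y$ for small $\e$, positivity being preserved because $\bar\rho$ is strictly positive. Thus everything reduces to showing $\ker\cD^\dagger=\C\one$ iff the stated commutant condition holds.

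The key step is to identify $\ker\cD^\dagger$ as the commutant $\{K,L_\alpha,L_\alpha^\dagger\}'$ (the prime denoting all operators commuting with every element of the set), and here the faithful steady state enters essentially. The inclusion ``$\supseteq$'' is immediate by substitution. For ``$\subseteq$'' I would introduce the dissipation function
\begin{equation*}
\cL(X) := \cD^\dagger(X^\dagger X) - X^\dagger\cD^\dagger(X) - \cD^\dagger(X^\dagger)\,X
\end{equation*}
and compute it directly: the Hamiltonian contributions cancel and the dissipative terms reorganize into
\begin{equation*}
\cL(X) = \sum_\alpha [X,L_\alpha]^\dagger\,[X,L_\alpha] \ \geq\ 0\ .
\end{equation*}
Pairing against $\bar\rho$ and using $\tr[\bar\rho\,\cD^\dagger(Y)]=\tr[\cD(\bar\rho)\,Y]=0$ for every $Y$ (valid since $\cD\bar\rho=0$), all three terms defining $\cL(X)$ vanish once $\cD^\dagger(X)=0$, so $\tr[\bar\rho\,\cL(X)]=0$. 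As $\bar\rho$ is strictly positive and $\cL(X)\geq 0$, this forces $\cL(X)=0$, hence $[X,L_\alpha]=0$ for all $\alpha$. Applying the same reasoning to $X^\dagger$ (which also lies in $\ker\cD^\dagger$, since $\cD^\dagger(X)^\dagger=\cD^\dagger(X^\dagger)$) gives $[X,L_\alpha^\dagger]=0$; the whole dissipative part of $\cD^\dagger(X)$ then drops out and $\cD^\dagger(X)=i[K,X]=0$ yields $[K,X]=0$.

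To finish, I would combine the two steps: the steady state is unique iff $\dim\ker\cD^\dagger=1$ iff $\ker\cD^\dagger=\C\one$ iff the only $X$ commuting with all of $\{K,L_\alpha,L_\alpha^\dagger\}$ is a multiple of $\one$, which is exactly the stated criterion. The reformulation that $\{K,L_\alpha,L_\alpha^\dagger\}$ generate all operators on $\cH$ is then the von Neumann double commutant theorem: the set is self-adjoint (it contains each $L_\alpha$ together with $L_\alpha^\dagger$, and $K=K^\dagger$), so $\{K,L_\alpha,L_\alpha^\dagger\}'=\C\one$ is equivalent to $\{K,L_\alpha,L_\alpha^\dagger\}^{\prime\prime}$ being all operators on $\cH$, and in finite dimension the bicommutant is precisely the $\ast$-algebra generated by the set. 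I expect the main obstacle to be the dissipation-inequality step, namely verifying the cancellation of the Hamiltonian terms and the positive reorganization of $\cL(X)$, and then the passage from $\tr[\bar\rho\,\cL(X)]=0$ to the individual commutator conditions, where strict positivity of $\bar\rho$ is used in a crucial way. A secondary point requiring care is the clean reduction of ``uniqueness'' to $\dim\ker\cD=1$ and the rank identity $\dim\ker\cD=\dim\ker\cD^\dagger$.
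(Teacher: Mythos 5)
The paper does not actually prove this statement: it is recalled as Frigerio's Theorem and used as a black box, with the proof deferred to the cited reference \cite{Frig78}. So the only question is whether your self-contained argument is sound, and it is. Your route is essentially Frigerio's original one, carried out with elementary finite-dimensional tools: (i) reduce uniqueness to $\dim\ker\cD=1$, using the strictly positive steady state to convert any additional kernel element into a second density-matrix steady state $\bar\rho\pm\eps Y$ (this is where faithfulness is indispensable; without it the perturbation could leave the positive cone); (ii) pass to the Heisenberg picture and identify $\ker\cD^\dagger$ with the commutant of $\{K,L_\alpha,L_\alpha^\dagger\}$ via the dissipation-function identity $\cD^\dagger(X^\dagger X)-X^\dagger\cD^\dagger(X)-\cD^\dagger(X^\dagger)X=\sum_\alpha[X,L_\alpha]^\dagger[X,L_\alpha]\geq 0$, paired against the faithful state; (iii) finish with the finite-dimensional double commutant theorem. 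I checked the algebra: the Hamiltonian contributions do cancel in the dissipation function, the dissipative terms do reorganize into the stated sum of positive operators, and $\tr[\bar\rho\,\cL(X)]=0$ together with strict positivity of $\bar\rho$ forces each $[X,L_\alpha]=0$, after which $[X,L_\alpha^\dagger]=0$ (from $X^\dagger\in\ker\cD^\dagger$) and then $[K,X]=0$ follow exactly as you say. The two hinges you flagged as delicate are also fine: $\ker\cD$ and $\ker\cD^\dagger$ are both closed under adjoints since $\cD(X)^\dagger=\cD(X^\dagger)$ and likewise for $\cD^\dagger$, and $\dim\ker\cD=\dim\ker\cD^\dagger$ is just rank--nullity plus equality of the ranks of a map and its Hilbert--Schmidt adjoint. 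What your write-up buys over the paper's citation is a complete, finite-dimensional proof of the equivalence, including the direction (uniqueness implies trivial commutant) that the paper never needs but the theorem asserts; the only cosmetic caveat is that your bicommutant step should say the \emph{unital} $\ast$-algebra generated by the set, which is harmless here since ``all operators on $\cH$'' contains $\one$.
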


Note that equation \eqref{Lindblad1} has the form considered in Frigerio’s Theorem if we simply replace $K$ by $(K+H)$, so that Frigerio’s Theorem may also be applied to equation \eqref{Lindblad1}.

%Note that equation \eqref{Lindblad1} has the form considered in Frigerio's Theorem if we simply replace $K$ by $K+H$, so that Frigerio's Theorem may also be applied to equation \eqref{Lindblad1}. 

\noindent{Proof of Theorem~\ref{UNIQUECOR}}:  Let $A $ be an operator on $\cH_A\otimes \cH_B$ such that $[\hat{K}\otimes\one_B,A]=0$ and for all $\alpha$: $[\hat{L}_\alpha\otimes \one_B,A] = [\hat{L}_\alpha^\dagger \otimes \one_B,A]= 0$, where $\hat{L_\alpha}$ and $\hat{K}$ are defined in \eqref{Lhat} and \eqref{Khat}.
Expand $A = \sum_\gamma W_\gamma\otimes E_\gamma$ where the $W_\gamma$ are operators on $\cH_A$ and the $E_\gamma$ are an orthonormal basis for operators on $\cH_B$.  Then $[\hat{L}_\alpha\otimes \one_B,A]=0$ becomes
\begin{equation}
\sum_{\gamma} [\hat{L}_\alpha,W_\gamma]\otimes E_\gamma = 0\ ,
\end{equation}
and since the $E_\gamma$ are orthonormal, $[\hat{L}_\alpha,W_\gamma] =0$ for each $\alpha,\gamma$.  A similar argument shows that $[\hat{L}_\alpha^\dagger,W_\gamma] =0$ for each $\alpha,\gamma$, and that $[\hat{K},W_\gamma] =0$ for each $\gamma$. 

Since $\cD_A$ is ergodic on $\cH_A$, by Theorem~\ref{Appthm},
the only operators on $\cH_A$ that commute with $\hat{K}$,   $\hat{L}_\alpha$,
and $\hat{L}_\alpha^\dagger$  for all $\alpha$ are multiples of the identity. Hence each $W_\gamma$ is of the form $W_\gamma = c_\gamma \one_A$ for some constant $c_\gamma$.  It follows that 
\begin{equation}
A = \one_A\otimes X \quad{\rm where}\quad X = \sum_{\gamma}c_\gamma E_\gamma ~.    
\end{equation}
Therefore, the only operators $A$ on $\cH_A\otimes \cH_B$ that satisfy \eqref{CommCond} of  Theorem~\ref{Appthm} are operators of the form
$\one_A\otimes X$ such that $[H,\one_A\otimes X]=0$. 
%Hence the steady state will be unique if and only if the only such operators $X$ are multiples of the identity.  

Now suppose that the only operators of the form $\one_A\otimes X$ such that $[H,\one_A\otimes X]=0$ are multiples of the identity.  
Let $\widetilde  \rho$ denote a steady state  that has maximal support, which exists by Theorem~\ref{ExMAX}.  If $\widetilde\rho$ is positive definite,  then by Frigerio's Theorem, $\widetilde\rho$ is the unique steady state, and so $\bar\rho = \widetilde\rho$ which is positive definite.

On the other hand, if  $\tilde\rho$ is not positive definite, then neither is any other steady state, including our steady state $\bar\rho$ that commutes with $H$. 
We claim that in this case, there would exist self-adjoint operators $X$ on $\cH_B$ other than multiples of the identity such that  $[H,\one_A\otimes X]=0$. Hence under our assumption  on operators satisfying  $[H,\one_A\otimes X]=0$, $\tilde\rho$ must be positive definite, and must be the unique steady state by Frigerio's Theorem, and hence equals $\bar\rho$.

To see this, note that by Theorem 2.1, $\bar\rho$ has the form $\bar\rho= \hat\rho_A\otimes\rho_B$. Since we assumed that $\hat\rho_A$ is positive definite, the projector $P$ onto the null space of $\bar\rho$ has the form $\one_A\otimes P_B$ where $P_B$ is the projector onto the null space of $\rho_B$. 
By hypothesis, $[H,\bar\rho]=0$. Then since all of the spectral projections of $\bar\rho$ are polynomials in $\bar\rho$,  $\one_A\otimes P_B$ is can be written as a polynomial, and hence it commutes with $H$. But then $X_B := P_B - c\one_B$, where $c$ is chosen to make $X_B$  traceless, is a non-zero traceless self-adjoint operator such that $\one_A\otimes X_B$ commutes with $H$, and therefore, if $\bar\rho$ is a degenerate (i.e. not positive definite) steady state commuting with $H$, it is not the unique steady state.

%Now suppose that $\bar\rho$ is a steady state that commutes with $H$, but is not positive definite. Since by Theorem~\ref{main1}, any steady state has the form $\bar\rho = \rho_B\otimes\hat\rho_A$. Since $\hat\rho_A$ is positive definite, the projector $P$ onto the kernel of $\bar\rho$ has the form $\one_A \otimes P_B$ where $P_B$ is the projector onto the kernel of $\rho_B$.  By hypothesis, $[H,\bar\rho]= 0$, and then since all of the spectral projections of $\bar\rho$ are polynomials in $\bar\rho$, $\one_A\otimes P_B$ is as well, and hence it commutes with $H$. But then $X_B := P_B- c\one$, where $c$ is chosen to make $X_B$ traceless, $X_B$ is a non-zero tracelss self-adjoint operator such that $\one_A\otimes X_B$ commutes with $H$, and therefore, if $\bar\rho$ is a degenerate steady state commuting with $H$, it is not the unique steady state.

To simplify the condition on solutions of $[H,\one_A\otimes X]=0$, observe, that since $H$ is self adjoint, $[H,\one_A\otimes X]=0$ if and only if $[H,\one_A\otimes X^\dagger]=0$,
and hence it suffices to consider self adjoint $X$. Finally since $\one_A\otimes X$ commutes with $H$ if and only if $\one_A\otimes(X - \tr[X]\one_B)$ commutes with $H$, we may freely assume $X$ to be traceless.  Thus, the steady state is unique if and only if the only traceless self-adjoint operator $X$ on $\cH_B$  such that 
$[H,\one_A\otimes X]=0$ is $X=0$.

Now suppose that the stationary state is not unique, so that there exists a non-trivial operator $X$ on $\cH_B$ such that $[H,\one_A\otimes X] =0$. 
Then $[H_B + H_{AB},\one_A\otimes X]=0$, and since $\tr_A[H_{AB}] =0$,
\begin{equation}
0 =\tr_A{[H_{AB},\one_A\otimes X]}= [H_B,X]\ , 
\end{equation}
 from which the rest follows.
QED

\section{Spin chain example}

%Cite more refs with similar or same examples.

%As a simple example, we now consider the oft studied 
The boundary driven XX (or XY) spin model on a chain of $\ell$ sites for which the dissipator is of the form \eqref{SimpleEx} is exactly solvable,  and the unique $\bar{\rho}$ is of the form \eqref{rhov}.  This model, and close relatives of it, are also presented in \cite{LPS,pros,zb} and references therein to illustrate various thoerems discussed in those papers. In this section we discuss this model as an illustration of Theorem~\ref{UNIQUECOR} for a $\cD_A$ of the form \eqref{SimpleEx}.  

After the Jordan-Wigner (JW) transformation its Hamiltonian has the form, c.f. eq. (15) in \cite{LPS},
\begin{equation}\label{HAMDEF}
    H=\sum_{j=1}^{\ell -1} (a^\dagger_ja_{j+1}+a^\dagger_{j+1}a_j)~,
\end{equation}
where $a_j$, $a^\dagger_j$ are the usual annihilation and creation operators of the JW fermions at site $j$.  As is well known, the particle number operator
\begin{equation}
    N=\sum_{j=1}^\ell a^\dagger_ja_j
\end{equation}
commutes with $H$.

Let $A$ be the first site of this chain, $j=1$, while $B$ is all the remaining sites.
Fix $\beta > 0$ and define
\begin{equation}\label{RHOADEF}
\hat{\rho}_A := \frac{1}{1+e^{-\beta}}(e^{-\beta}|1\rangle\langle 1|_A
+ |0\rangle \langle0|_A) = \frac{1}{1+e^{-\beta}} e^{-\beta a_1^\dagger a_1}\ .
\end{equation}
%Or in spin notation with any finite $\tilde{h}$ we can have:
%\begin{equation}\label{RHOADEF2}
%\hat{\rho}_A := \frac{1}{2}(\cI_A
%+\tanh{(\tilde{h}\sigma_A^{(z)}}) )\ ,
%\end{equation}
%where $\sigma^{(\alpha)}$ are the Pauli matrices, $\alpha=x,y,z$.

 Let $\cD=\cD_A\otimes\cI_B$ and let $\cD_A$ be the dissipator defined as in \eqref{SimpleEx} by
\begin{equation}\label{SimpleEx2}
    \cD_A\rho = \epsilon[\hat{\rho}_A\otimes\tr_A[\rho(t)]-\rho(t)]\ 
\end{equation}
 in terms of $\hat\rho_A$ as in \eqref{RHOADEF}. %, and let $\cD := \cD_A\otimes \one_B$.
Let $\ell\geq 2$, and let $H$ be the Hamiltonian defined in \eqref{HAMDEF}.
Define $\bar\rho$ to be the $\ell$-fold tensor product state
\begin{equation}
\bar\rho := \left(\frac{1}{1+e^{-\beta}}(e^{-\beta}|1\rangle\langle 1|
+ |0\rangle \langle0|)\right)^{\otimes \ell} 
\end{equation}
acting on $\cH_A\otimes\cH_B$ so that ${\displaystyle \bar\rho =  \left(\frac{1}{1+e^{-\beta}}\right)^\ell e^{-\beta N}}$.

Note  that $\bar\rho$ has the form $\hat\rho_A\otimes\rho_B$, so that $\cD\bar\rho =0$. Moreover, since $H$ commutes with $N$, and since $\bar\rho$ is a function of $N$,
 $[H,\bar\rho] =0$. Therefore $\bar\rho$ is a steady state of 
 \eqref{Lindblad1}.  
Since $\bar\rho$ is positive definite, one could apply Frigerio's Theorem to prove that $\bar\rho$ is the unique steady state -- there are many ways to treat this simple model. However, the work is especially simple using Theorem~\ref{UNIQUECOR} since we need only concern oursleves with $H$ and not the operators $L_\alpha$ and $L_\alpha^\dagger$ in the Lindblad description of $\cD_A$.

\noindent{Proof that $\bar\rho$ is the unique steady state via 
Theorem~\ref{UNIQUECOR}}: Since 
$\{n, a,a^\dagger,\one-n\}$
is an orthonormal basis for operators on 
$\cH_A$, we may expand  
\begin{eqnarray}
H = K_{1,1}\otimes n + K_{1,0}\otimes a + K_{0,1}\otimes a^\dagger + K_{0,0}\otimes (\one-n)\label{BLOCKFORM1}
\end{eqnarray}
and then write $H$ is the block matrix form
\begin{equation}\label{BLOCKFORM2}
H = \left[\begin{array}{cc} K_{1,1} & K_{1,0}\\ K_{0,1} & K_{0,0} \end{array}\right]\ 
\end{equation}
with  operators $K_{i,j}$ on $\cH_B$.

We will proceed by induction on $\ell$. For $\ell=2$, \eqref{BLOCKFORM2}
reduces to 
${\displaystyle H = \left[\begin{array}{cc} 0 & a\\ a^\dagger & 0 \end{array}\right]}$.
Likewise, the block form of $\one\otimes X$ is ${\displaystyle \one\otimes X = \left[\begin{array}{cc} X & 0\\ 0 & X \end{array}\right]}$.
Then $[H,\one\otimes X] =0$ becomes 
$\left[\begin{array}{cc} 0 & [X,a]\\ {[X,a^\dagger]} & 0 \end{array}\right]$
which reduces to $[a,X] =0$ and 
$[a^\dagger,X] =0$.  Any operator that commutes with both $a$ and $a^\dagger$ also commutes with $n$ and $\one-n$, and hence with everything. Therefore, any such operator $X$ is a multiple of the identity. Since $\tr[X] =0$, $X=0$.  %This proves the uniqueness for $N=2$.  
This proves uniqueness for $N=2$. 

 For $N>2$, let $X$ be self-adjoint on $\cH_B$ and such that $[\one\otimes X,H] =0$. We claim that then $X$ has the form $X= \one\otimes Y$
corresponding to the decomposition $\cH_B = \cH\otimes \cH^{\otimes N-2}$.

To see this, again write $H$ in the block form \eqref{BLOCKFORM2} with operators on $\cH_B$ as entries: 
\begin{equation}\label{BLOCKFORM3}
H = \left[\begin{array}{cc} K & a\otimes \one\\ a^\dagger\otimes \one & K\end{array}\right]\ ,
\end{equation}
where $a\otimes \one$,  $a^\dagger \otimes \one$ act on $\cH_B$ through its identification with $\cH\otimes \cH^{\otimes N-2}$, and where
${\displaystyle
K := \sum_{j=2}^{N-1} H_{j,j+1}}$.
Then $[H,\one\otimes X] =0$ is equivalent to
\begin{equation}
[K,X] = 0\ ,\quad [ a\otimes \one,X] =0 \quad{\rm and}\quad  [ a^\dagger \otimes \one,X] =0 \ .
\end{equation}
Now let $\{E_1,\dots, E_{2^{N-1}}\}$ be an orthonormal basis of operators on $\cH^{\otimes N-2}$. Then $X$ has a unique expansion
${\displaystyle
X = \sum_{j=1}^{2^{N-1}} W_j\otimes E_j
}$
where each $W_j$ is an operator on $\cH$. Then 
$0 = [ a\otimes \one,X] =\sum_{j=1}^{2^{N-1}} [a,W_j]\otimes E_j$ and 
$0 = [ a^\dagger\otimes \one,X] =\sum_{j=1}^{2^{N-1}} [a^\dagger,W_j]\otimes E_j$. It follows that for each $j$ $[a,W_j] = [a^\dagger,W_j]= 0$, and then  $W_j = c_j\one$ for some constant $c_j$.  Therefore
${\displaystyle X = \sum_{j=1}^{2^{N-1}} \one \otimes c_jE_j = \one\otimes Y}$
where ${\displaystyle Y = \sum_{j=1}^{2^{N-1}}  c_jE_j}$.

Now make the inductive assumption that this has been proved  for 
$N \leq M$; we shall show it is then true for $N=M+1$. 

Let $X$ be traceless and self adjoint  on $\cH_B = \cH^{\otimes M}$, and suppose that $\one\otimes X$ commutes with 
$H = \sum_{j=1}^{M} H_{j,j+1}$.  By what we proved just above, $X = \one\otimes Y$, where $Y$ is traceless and self adjoint on the last 
$M-1$ factors of $\cH$ in $\cH_B$.  Then $\one\otimes X = \one\otimes \one \otimes Y$, which evidently commutes with $H_{12}$. Therefore 
$[\one \otimes X,H] =0$ becomes 
\begin{equation}
[\one \otimes Y,H'] =0 \quad{\rm where}\quad H' = \sum_{j=2}^{M}H_{j,j+1}\ .  
\end{equation}
By the inductive hypothesis, $Y= 0$. 
QED

\begin{rem} Note that the form $\bar\rho = \hat{\rho}_A\otimes{\rho}_B$ of the unique steady state is independent of the parameter $\epsilon$, and this proves analytically that, as a function of $\epsilon$, the steady state does not converge to the Gibbs state as $\epsilon$ converges to zero, an issue discussed in \cite{TDKP}.     
\end{rem}

\if false
  
This is in stark contrast with classical dynamics, where one may take a chain of $N$ oscillators, coupled as usual through 
position-dependent interactions, and then drive the system with an Ornstein-Uhlenbeck process coupled only to the momentum variable at one 
end of the chain. Under appropriate conditions on the interactions, the result is that the corresponding Kolmogorov forward equation for the phase space density  is ergodic,  
and its unique steady state solution is the Gibbs distribution at a temperature determined by the Gaussian steady state of the Ornstein-Uhlenbeck process.
Theorem~\ref{summ} says that there is no analog of this in the quantum setting with finite-dimensional Hilbert spaces. 

By a theorem of Lindblad \cite{Lin76} and Gorini, Kossakowski and Sudarshan \cite{GKS76},  $\cL_A$ has the form
\begin{equation} 
\cL_A(X) = -i[K,X] + \sum_{i=1}^m L_i^{\phantom{\dagger}} XL_i^\dagger - \frac12(L_i^\dagger L_i^{\phantom{\dagger}} X + 
X L_i^\dagger L_i^{\phantom{\dagger}})
\end{equation}
where $K$ is self adjoint, and $\{L_1,\dots,L_m\} \subset \cB(\cH)$.  If $d_A$ denotes the dimension of $\cH_A$,  $m \leq d_A^2$.  
The self adjoint operator $K$ is sometimes called the {\em Lamb shift Hamiltonian}.  

The precise form of the Lindbladian generator depends on the model for the thermal bath and its coupling to the system. 
If these are specified so that they  satisfy certain reasonable conditions, a theorem of Davies \cite{Da74,SpLe77} gives the form of $\cL$.  
However, the equilibrium induced by the bath should not depend  much on the nature of the bath, and one might hope to be able to 
choose a suitable Lindbladian $\cL_A$, coupled only to $A$, % one site, 
so that for $\beta>0$ depending on the $\cL_A$, the Gibbs state
\begin{equation}
\rho_\beta = \frac{1}{Z}e^{-\beta H}\ 
\end{equation}
is the unique steady state of the full system ($A$ and $B$). However, in our quantum setting, this is impossible.

The analysis of this problem involves the ergodic properties of the driving Lindbladian equation.  We do not want to assume ergodicity of the quantum Markov semigroup generated by $\cL_A$.  Indeed, in the classical system described above, the driving Ornstien-Uhlenbeck process acts only on the momentum 
variables $p$ of the last oscillator, and not on the position variables $q$.  The next subsections recall the ergodic theory that is used in the more precise formulations of Theorem~\ref{summ}, beginning with the classical case to which we wish to compare. 

\subsection{Classical ergodic decompositions} In a classical setting that is analogous to the quantum setting that we investigate, consider a continuous time Markov process on a finite set $\cX$ with 
transition  matrix
$P_t(x_1,x_2)$, giving the probability of a transition from $x_1$ to $x_2$ in time $t$. There are two semigroups associated to these transition probabilities.
The first updates probability densities:
If $\rho(x)$ is a probability density on $x$ (with respect to counting measure) specifying the probabilities for the state at time $0$, the probability density $\rho_t(x)$  specifying the probabilities for the state at time $t$ is
\begin{equation}\rho(x,t) := P_t \rho(x) := \sum_{x'\in \cX}\rho(x')P_t(x',x)\ .
\end{equation}
Since $P_t$ is a semigroup,  it has the form $P_t = e^{tL}$ and $\rho(x,t) := e^{tL}\rho(x)$ solves the Kolmogorov forward equation:
\begin{equation}
\frac{\partial}{\partial t}\rho(x,t) = L \rho(x,t)\ .  
\end{equation}
The other is the dual semigroup given by
\begin{equation}
P^\dagger_t f(x) := f(x,t) = \sum_{x'\in \cX}P(x,x')f(x')\ ,
\end{equation}
which can be written as $P_t^\dagger = e^{tL^\dagger}$ in terms of its generator $L^\dagger$. If we introduce the Hilbert space $L^2(\cX)$ with counting measure on $\cX$, $P^\dagger_t$ is the adjoint of $P_t$, and then $L^\dagger$ is the adjoint of $L$.  Defining $f(x,t) := P^\dagger_tf(x)$, $f(x,t)$ solved the 
Kolmogorov backward equation
\begin{equation}
\frac{\partial}{\partial t}f(x,t) = L ^\dagger f(x,t)\ .  
\end{equation}
Then
$$
\sum_{x\in \cX}\rho(x,t)f(x) = \sum_{x\in \cX}\rho(x)f(x,t)\ .
$$

We briefly recall the ergodic decomposition of this stochastic process to give clear context to the description of its less familiar quantum analog.

By construction, $P_t^\dagger 1 = 1$ for all $t$, and hence $L^\dagger 1 = 0$.   Let $\cC$ denote the nullspace of $L^\dagger$, and suppose the dimension of $\cC$ is $m$.  Suppose there is at least one invariant density $\rho$ such that $\rho(x) > 0$ for all $x$. This ensures that there are no transient states. 
Then $\cX$ decomposes into a disjoint union 
\begin{equation}
\cX = \bigcup_{j=1}^m \cX_j \ ,
\end{equation}
such that  if we define the functions 
\begin{equation}
1_{\cX_j}(x) = \begin{cases} 1 & x\in \cX_j\\ 0 & x\notin \cX_j\end{cases}\ ,
\end{equation}
this orthogonal set of functions $\{1_{\cX_1},\dots,1_{\cX_m}\}$ spans $\cC$.  Then it is clear that $\cC$   is closed 
under multiplication as well as linear operations; that is, it is an algebra. The associated stochastic process 
leave each of the sets $\cX_j$ invariant, and there is exactly one invariant density supported on each $\cX_j$.   
These invariant densities $\{\rho_1,\dots,\rho_m\}$ are a basis for  the nullspace of $L$.  The decomposition
of the process into these components is called its ergodic decomposition.

The commutative case is the special case in which ${\rm dim}(\cK_1^{(j)}) =1$ for each $j$. Hence the following theorem subsumes Theorem~\ref{main1}, but is in fact a simple extension of it. 

Thus we have found that it is possible for $\rho_\beta$ to be a steady state of $\cL$, but only if $H$ and thus $\rho_\beta$ have some very special decoupled structure.  For such an $H$, what happens for a generic initial state $\rho(t=0)$?  Within each block, the Lindbladian is ergodic on $\cH_A^{(j)}$, so it will relax the state of $A$ in that block to $\omega_j$ in the long time limit.  In doing so, it will also fully dissipate any coherences between blocks, so the long time $\rho(t\rightarrow\infty)$ will be asymptotically block diagonal.  But within each block, the state on $B$ will evolve unitarily due to $H_B^{(j)}$, so in general will not go to a steady state; system $B$ is fully decoupled from the Lindbladian.  Thus although $\rho_\beta$ is a steady state, it is not unique, and generic initial states do not even go to a steady state at all.

We now extend this argument to the general case: 

\begin{thm}\label{main2} Let  $\cL$ be the generator of a quantum dynamical semigroup on $\cB(\cH)$ such that  $\rho_\beta$ is a steady state of  \eqref{lind1}.
Then the nullspace of $\cL^\dagger$ is a von Neumann subalgebra of $\cB(\cH)$. Let $\cZ$ denote the center of this algebra, and let 
$\{P_1,\dots,P_m\}$ be a set of mutually orthogonal projections spanning $\cZ$. For $j=1,\dots,m$, let $\cK_j$ denote the range of $P_j$, and let 
$\cK_j :=   \cK_1^{(j)}\otimes \cK_2^{(j)}$ and $\{\omega_1,\dots,\omega_m\}$ be such that every steady state solution $\rho_\infty$  of  $\cL\rho = 0$
has the form \eqref{STESTA1} for some density matrices $\{\rho_1,\dots,\rho_m\}$. 

 Then the Hamiltonian $H$ necessarily has the form 
\begin{equation}\label{DECOMP2}
H = \sum_{j=1}^m\left(  
H_j\otimes \one_{\cK_2^{(j)}} -\one_{\cK_1^{(j)}}\otimes \beta^{-1}\log \omega_j\right)\ 
\end{equation}
where $H_j$ is some self adjoint operator on $\cK_1^{(j)}$.  Again, while $\rho_\beta$ is a steady state solution of \eqref{lind1}, there are infinitely many others. 
\end{thm}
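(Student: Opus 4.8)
The plan is to reduce the statement to the structure theory of the fixed-point set of the Heisenberg-picture semigroup $T_t:=e^{t\cL^\dagger}$ and then to extract the form of $H$ by taking a logarithm. The one structural input is that, because $\rho_\beta=e^{-\beta H}/Z(\beta)$ is strictly positive, the semigroup has a \emph{faithful} stationary state. First I would run the standard Kadison--Schwarz argument: for the unital completely positive $T_t$ one has $T_t(X^\dagger X)\geq T_t(X)^\dagger T_t(X)$, so if $X$ lies in $\ker\cL^\dagger$ (whence $T_t(X)=X$ for all $t$) then $T_t(X^\dagger X)-X^\dagger X\geq0$; pairing this positive operator with the faithful invariant state $\rho_\beta$ gives expectation zero, forcing $T_t(X^\dagger X)=X^\dagger X$. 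Thus $X^\dagger X\in\ker\cL^\dagger$, equality holds throughout Kadison--Schwarz on fixed points, and so $\ker\cL^\dagger$ lies in the multiplicative domain of every $T_t$ and is closed under products; closure under adjoints is immediate because $\cL^\dagger$ is Hermiticity-preserving. Hence $\cN:=\ker\cL^\dagger$ is a unital $\dagger$-subalgebra of $\cB(\cH)$, i.e.\ a finite-dimensional von Neumann algebra, which is the first assertion.

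Next I would invoke the Wedderburn structure theorem for finite-dimensional von Neumann algebras: up to the unitary identification already built into the statement, $\cH=\bigoplus_{j=1}^m\cK_1^{(j)}\otimes\cK_2^{(j)}$, the center $\cZ$ of $\cN$ is spanned by the block projections $\{P_1,\dots,P_m\}$, and $\cN=\bigoplus_j\cB(\cK_1^{(j)})\otimes\one_{\cK_2^{(j)}}$. Dually, in the Schrödinger picture this is exactly the steady-state decomposition \eqref{STESTA1}: every solution of $\cL\rho=0$ is block diagonal of the form $\rho_\infty=\bigoplus_j p_j\,(\rho_j\otimes\omega_j)$, with the $\omega_j$ fixed by the semigroup and the weights $p_j$ and density matrices $\rho_j$ on $\cK_1^{(j)}$ free. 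I take this decomposition as the established framework and apply it to the particular steady state $\rho_\beta$.

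The heart of the argument is then a short computation. Being a steady state, $\rho_\beta$ inherits the block form $\rho_\beta=\bigoplus_j p_j\,(\sigma_j\otimes\omega_j)$ for suitable probabilities $p_j$ and density matrices $\sigma_j$ on $\cK_1^{(j)}$; in particular it is block diagonal with respect to the $P_j$, which is forced rather than assumed. Since each block is a tensor product I can take logarithms block by block using $\log(A\otimes B)=\log A\otimes\one+\one\otimes\log B$,
\begin{equation}
\log\rho_\beta=\bigoplus_{j=1}^m\Big[(\log p_j)\,\one_{\cK_j}+\log\sigma_j\otimes\one_{\cK_2^{(j)}}+\one_{\cK_1^{(j)}}\otimes\log\omega_j\Big].
\end{equation}
Substituting into $H=-\beta^{-1}\log\rho_\beta-\beta^{-1}\log Z(\beta)\,\one$ and collecting, in each block, the three scalar/first-factor terms into the single self-adjoint operator
\[
H_j:=-\beta^{-1}\big(\log Z(\beta)+\log p_j\big)\one_{\cK_1^{(j)}}-\beta^{-1}\log\sigma_j
\]
on $\cK_1^{(j)}$ produces precisely \eqref{DECOMP2}. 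The ``infinitely many others'' assertion is then immediate from \eqref{STESTA1}: whenever the fixed-point algebra is nontrivial---some block has $\dim\cK_1^{(j)}>1$, or $m>1$---the free density matrices $\rho_j$ yield a continuum of distinct steady states.

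I expect the main obstacle to be the first step, and in particular the control of the \emph{peripheral} spectrum of $T_t$. The Kadison--Schwarz/faithful-state argument cleanly identifies the fixed points (eigenvalue $0$) and makes them an algebra, but to conclude that the Schrödinger steady states are exactly the $P_j$-block states of \eqref{STESTA1} one must rule out additional stationary contributions from the purely imaginary eigenvalues of $\cL$. In finite dimensions with a faithful invariant state this is standard---the peripheral part acts as a unitary rotation inside $\cN$ that preserves the block decomposition and averages away when one restricts to genuine fixed points---but it is where care is needed, as opposed to the purely formal algebra used in the logarithm step.
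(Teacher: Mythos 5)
Your proposal is correct and follows essentially the same route as the paper's proof: faithfulness of $\rho_\beta$ makes the fixed-point set of $\cL^\dagger$ a von Neumann algebra (the paper cites Frigerio where you run the Kadison--Schwarz argument), the structure theorem then forces $\rho_\beta=\sum_j p_j\,\sigma_j\otimes\omega_j$, and taking the logarithm block by block yields \eqref{DECOMP2}. Your closing worry about the peripheral spectrum is not actually needed here, since the theorem's hypotheses already grant that every steady state has the form \eqref{STESTA1}.
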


\begin{proof}  As before the partial trace of $\rho_\beta$ will be a faithful steady state for $\cL$, so that $\cC$ is an algebra.  Arguing exactly as above, we conclude that
\begin{equation}
\rho_\beta = \sum_{j=1}^m W_j\otimes \omega_j
\end{equation}
where each $W_j$ is strictly positive on $\cK_1^{(j)}$.  Then taking the logarithm as before yields \eqref{DECOMP2}. 
Conversely, it is clear that when $H$ has this structure, $\rho_\beta$ is a steady state of \eqref{lind1}.  Note that here the factor of $\cH$ that we had been separating out and labeling $\cH_B$ is now included in all of the $\{\cK_1^{(j)}\}$.
\end{proof}

In conclusion, whenever the Gibbs state $\rho_\beta$ is a steady state solution of \eqref{lind1}, the degrees of freedom on which the 
Lindbladian acts in a non-trivial way are completely uncoupled from the remaining degrees of freedom, and the Gibbs state is never 
the unique steady state solution of \eqref{lind1}.
This proves Theorem~\ref{main2}.

\fi

\section{Existence of steady states with maximal support}\label{S4}

We give here a simple proof of the existence of a stationary state of \eqref{Lindblad1} which yields some additional information that is used here.  Many proofs of existence of steady states invoke fixed point theorems; e.g., the Markov-Kakutani Fixed Point Theorem  in \cite{D70} in a general infinite dimensional setting, and the Brower fixed Point Theorem in \cite{ZB04} in a finite dimensional setting. The mean ergodic theorem provides a more constructive approach and additional information. 

\begin{thm}\label{ExMAX} For a $d$ dimensional Hilbert space $\cH$, the equation
\eqref{Lindblad1} has at least one steady state solution. Moreover, there exists a steady state solution $\bar\rho$  that has maximal support in the sense that if $\rho$ is any steady state solution, then 
\begin{equation}\label{MAxSupp}
\rho \leq d \bar\rho \ .   
\end{equation}  
\end{thm}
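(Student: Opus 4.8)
The plan is to produce a steady state by time-averaging a cleverly chosen initial state, and to extract the maximal-support bound by averaging a single operator inequality. Write $\cL\rho := -i[H,\rho]+\cD\rho$ for the full generator, so that the semigroup $T_t := e^{t\cL}$ is completely positive and trace preserving. On the real vector space of self-adjoint operators on $\cH$ equipped with the trace norm, a positive trace-preserving map is a contraction (split a self-adjoint operator into its positive and negative parts and use the triangle inequality together with trace preservation). Hence $\{T_t\}_{t\geq 0}$ is a bounded $C_0$-semigroup on a finite-dimensional space.

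First I would invoke the mean ergodic theorem in finite dimensions. Because $\{T_t\}$ is bounded, decomposing $\cL$ into generalized eigenspaces shows that every eigenvalue on the imaginary axis is semisimple (a nontrivial Jordan block would force polynomial growth of $T_t$), while eigenvalues with negative real part contribute decaying terms. Consequently the Cesàro averages
\[
P := \lim_{T\to\infty}\frac{1}{T}\int_0^T T_t\,\d{t}
\]
converge to the spectral projection onto $\ker\cL$ along $\operatorname{ran}\cL$. Applying $P$ to the maximally mixed state $\rho_0 := \tfrac1d\one$ defines $\bar\rho := P\rho_0$. Each finite average $\tfrac1T\int_0^T T_t\rho_0\,\d{t}$ is a density matrix, being an average of the density matrices $T_t\rho_0$, and since the density matrices form a closed convex set, the limit $\bar\rho$ is again a density matrix; moreover $\bar\rho\in\ker\cL$ gives $\cL\bar\rho=0$. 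This establishes the existence of a steady state.

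For the maximal-support claim the key observation is that every density matrix has spectrum in $[0,1]$, so $\rho\leq\one=d\rho_0$ for \emph{any} state $\rho$. Since each $T_t$ is positive it preserves the operator order, whence $T_t\rho\leq d\,T_t\rho_0$ for all $t$. If $\rho$ is any steady state then $T_t\rho=\rho$, so $\rho\leq d\,T_t\rho_0$ holds for every $t$; averaging this inequality over $[0,T]$ and letting $T\to\infty$ yields
\[
\rho \;\leq\; d\lim_{T\to\infty}\frac1T\int_0^T T_t\rho_0\,\d{t} \;=\; d\bar\rho,
\]
which is exactly \eqref{MAxSupp}.

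The contraction estimate and the order-preservation of positive maps are routine; the one step deserving care is the convergence of the Cesàro averages, that is, the absence of Jordan blocks for purely imaginary eigenvalues of $\cL$, which I would handle via the boundedness of $T_t$ in trace norm. The main conceptual point, however, is simply the choice $\rho_0=\tfrac1d\one$: the universal bound $\rho\leq\one$ available for this particular starting point is precisely what produces the dimension factor $d$, while positivity is what lets the inequality survive time-averaging.
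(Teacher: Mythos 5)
Your proof is correct and follows essentially the same route as the paper: Ces\`aro-averaging the semigroup applied to the maximally mixed state $\rho_0=\tfrac1d\one$, and pushing the universal bound $\rho\leq d\rho_0$ through the positivity-preserving evolution and the time average. The only difference is that where the paper cites Lance's mean ergodic theorem for the convergence of the averages, you supply the elementary finite-dimensional argument (semisimplicity of the purely imaginary eigenvalues of $\cL$ forced by boundedness of the trace-norm contraction semigroup), which is precisely the alternative the paper itself mentions in a closing remark.
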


\noindent{proof:} Let 
$\cL\rho := -i[H,\rho] +\cD\rho$ as in \eqref{Lindblad1}. Then 
each $e^{t\cL}$, $t>0$ is 
completely positive and trace preserving. As a consequence, 
 by a mean ergodic theorem of Lance \cite{Lance76},
for any operator $A$ on $\cH$, the limit
\begin{equation}\label{ERGLIM}
   \lim_{T\to\infty}\frac1T \int_0^T e^{t\cL}(A){\rm d}t := \mathcal{P}_\cL(A) 
\end{equation}
exists.  (In our finite dimensional setting, all topologies are equivalent,so the sense of convergence is immaterial.)  It is clear from the definition that for all $t$, $e^{t\cL}\mathcal{P}_\cL(A) = \mathcal{P}_\cL(A)$. Furthermore, since $\mathcal{P}_\cL$ preserves positivity and traces, if $A$ is any density matrix, then $\mathcal{P}_\cL(A)$ is a density matrix. This proves existence. 

Next, define the density matrix $\rho_0$ by $\rho_0 := \frac1d \one_\cH$, and define
\begin{equation}\label{ERGLIM2}
 \overline{\rho} := \mathcal{P}_\cL(\rho_0)\ .   
\end{equation}
Then $\bar\rho$ is a steady state. Now let $\rho$ be any other steady state. Since $\rho \leq \one_\cH$, $\rho \leq d \rho_0$, and then for each $t$, $\rho = e^{t\cL}\rho \leq d e^{t\cL}\rho_0$ so that \eqref{MAxSupp} is satisfied.     
QED

We remark that in our finite dimensional setting, the theorem of Lance has an elementary proof using the Jordan canonical form and a well-known contractive property  of trace preserving completely positive operators.

\section{Uniqueness}\label{UNIQ}  

Frigerio's theorem (\ref{Appthm}) \cite{Frig78} gives a general if and only if result for uniqueness of the stationary solution $\bar{\rho}$ of \eqref{Lindblad1} once we know the existence of a positive definite $\bar{\rho}$.  The latter requirement is essential, as pointed out in \cite{zb}.  An avoidance of this requirement is given by Yoshida \cite{ZB04,yosh} who proved that a sufficient condition for uniqueness of $\bar{\rho}$ is that the Lindbladian $\cL$ is such that all operators in $\cH$ are linear combinations of products of the operators in the set $\{H-\frac{i}{2}\sum_\alpha L_\alpha^\dagger L_\alpha, L_\alpha\}$ (all $\alpha$).  This set generally contains fewer operators than the set used by Frigerio.

Theorem (\ref{UNIQUECOR}) gives necessary and sufficient conditions for uniqueness for the case when the dissipator $\cD$ has the form \eqref{dissform} and $\cD_A$ is ergodic.  We do not require the {\it a priori} existence of a positive definite $\bar{\rho}$ but find the conditions for uniqueness and strict positivity of a $\bar{\rho}$ of the form $\hat{\rho}_A\otimes\rho_B$ which commutes with $H$.  Our conditions also ensure that when there exists a unique $\bar{\rho}$ then it is strictly positive.

\section{Acknowledgements}

We thank Abhishek Dhar, Manas Kulkarni and Hironobu Yoshida for discussions.  E.A.C. was partially supported by U.S. National Science Foundation grant  DMS-2055282.  D.A.H. was partially supported by U.S. National Science Foundation QLCI grant OMA-212075.

\appendix

\section{Appendix A: Non-ergodic $\cD_A$}\label{NONERG}

In this appendix, we explain how quantum ergodic decompositions may be used to extend Theorem~\ref{main1} to the case in which $\cD_A$ is not assumed to be ergodic

Let $\cH$ be a finite dimensional Hilbert space and let $\cL$ be the generator of a quantum 
dynamical semigroup $e^{t\cL}$ on operators on $\cH$ so that each $e^{t\cL}$ is completely positive and trace preserving. Then $\cL^\dagger$ is the 
generator of a quantum Markov semigroup $(e^{t\cL})^\dagger := e^{t\cL^\dagger}$. That is, for each $t$,  $e^{t\cL^\dagger}$ is completely positive with the property that $e^{t\cL^\dagger}\cI_\cH = \cI_\cH$. Because of this last property, $\cL^\dagger\cI_\cH = 0$. 

Let $\cC$ denote 
the null space of $\cL^\dagger$.  
Suppose that there exists at least one strictly positive steady state; that is, at least one strictly positive density matrix such that 
$\cL\rho = 0$.  Then the theorem of  Frigerio \cite{Frig78} says that $\cC$ is not just a vector space of operators on $\cH$; it is also closed under multiplcation and taking Hermitian adjoints, and evidently it contains $\cI_\cH$. This makes it a a von Neumann algebra.  Let $\cZ$ denote $\cC\cap \cC'$ where $\cC'$ is the commutant of $\cC$. This is a commutative von Neumann algebra callled the {\em center} of $\cC$. 
 
Every  commutative von Neumann algebra on a finite dimensional Hilbert space $\cH$  has the following  simple structure (see, e.g., \cite{L99}):  There is a set
$\{P_1,\dots,P_m\}$ of mutually orthogonal projections summing to $\cI_\cH$
whose complex span is the algebra.

The projectors  $\{P_1,\dots,P_m\}$ provide the basis for an {\em ergodic decomposition} of $e^{t\cL^\dagger}$.
Let $\cH_j$ denote the range of $P_j$ so that 
\begin{equation}
\cH = \bigoplus_{j=1}^m \cH_j\ .
\end{equation}

The following theorem is proved in \cite{L99,BN}: Each of the Hilbert spaces
$H^{(j)}$ has a factorization $\cH^{(j)} = \cK^{(j)}_\ell \otimes \cK^{(j)}_r$, determined by the generator $\cL$, where either of these factors may be, but neither need be,  one dimensional.  There is a set of $m$ density matrices on the ``right'' factors $\cK^{(j)}_r$, $\{\omega_1,\dots,\omega_m\}$ such that a density matrix $\rho$ on $\cH$
satisfies $\cL\rho =0$ if and only if 
it has the form
\begin{equation}\label{STESTA1}
\rho = \sum_{j=1}^m p_j \rho_j\otimes \omega_j
\end{equation}
where each $\rho_j$ is {\em any} density matrix on $\cK_1^{(j)}$ and the $p_j's$ are probabilities.  

The ergodic case is that in which $m=1$ and $\cK^{(1)}_\ell$ is one dimensional so that $\cH = \cK^{(1)}_r$ and then $\omega_1$ is the unique steady state. 

If we relax the assumption that $\cD_A$ is ergodic with a 
strictly positive steady state to only the assumption that 
$\cD_A$ has at least one strictly positive steady state, so that every steady state for $\cD_A$ has the form \eqref{STESTA1}, then the method of proof of Theorem~\ref{main1} can be used to prove that every steady state $\bar\rho$ of \eqref{Lindblad1} that commutes with $H$ has an expansion of the form \eqref{STESTA1} where now $\rho_j$ is a density matrix on 
$\cK^{(j)}_r\otimes \cH_B$: In this non-ergodic case, the steady states that commute with $H$ are a direct sum of components that again factor as tensor products.  Finally, if $\cD_A$ does not have any positive definite steady state. let $\bar\rho_A$  be a steady state of maximal support, as in Theorem~\ref{ExMAX}, and let $\cK_A$ be the subspace of $\cH_A$ that supports $\bar\rho_A$. (That is, $\cK_A$ is the orthogonal complement of the null spaces of $\bar\rho_A$). Let $P_A$ be the orthogonal projection onto $\cK_A$. Then for any operator $X$ on $\cH_A$
\begin{equation}
e^{t\cL}(P_AXP_A) = P_Ae^{t\cL}(P_AXP_A)P_A~, 
\end{equation}
so that the Lindbladian evolution may be restricted to operators on $\cK_A$, and then it has a strictly positive steady state (but a different Lindbladian description in terms of operators $L_\alpha$ now acting on $\cK_A$ instead of $\cH_A$.).  The above consideration apply to this reduced system, in which a ``transient part'' has been discarded. The transient part is irrelevant as far the the structure of steady sates is concerned.

\section{Appendix B: The Lindblad form of \eqref{SimpleEx}}\label{LINFORM7}
 
Here we show that \eqref{SimpleEx} can be written in the form \eqref{Lindblad2} and \eqref{dissform}.  %We need to decide where it goes in the paper. This example will be investigated below, and hence we explain how $\cD$ may be written in the form $\cD = \cD_A\otimes \one_B$ where $\cD_A$ has the form specified in \eqref{Lindblad2}. Take $K=0$ in \eqref{Lindblad2}. 

Let 
$\{|1\rangle,\dots,|d\rangle\}$ be an orthonormal basis for $\cH_A$, and for each $1\leq i,j \leq d$, define
\begin{equation}
 L_{i,j} =  \hat{\rho}_A^{1/2}|i\rangle\langle j|~. 
\end{equation}
Then $\sum_{1 \leq i,j\leq d}L_{i,j}^\dagger L_{i,j} =  \sum_{1 \leq i,j\leq d}\langle i|\hat{\rho}_A|i\rangle  |j\rangle\langle j| = \one_A$, and for any 
$X\in \cB(\cH_A)$,
\begin{eqnarray*}
\sum_{1 \leq i,j\leq d}L_{i,j} X L_{i,j}^\dagger
&=& \frac1d \sum_{1 \leq i,j\leq d} \langle j|X |j\rangle \hat{\rho}_A^{1/2}|i\rangle\langle i|\hat{\rho}_A^{1/2} 
\\ &=& \tr[X] \hat{\rho}_A\ .
\end{eqnarray*}
Now define the Lindbladian superoperator $\cD_A$ on $\cH_A$ by
\begin{equation}
\cD_A(X) = \sum_{1 \leq i,j\leq d}\left(L_{i,j} X L_{i,j}^\dagger-\frac12\{ L_{i,j}^\dagger L_{i,j},X\}\right)\ .
\end{equation}
By the computations made just above, for any operator $X$ acting in $\cH_A$,  
\begin{equation}\label{REDUCED}
\cD_A(X) = \tr[X]\hat{\rho}_A - X~. 
\end{equation}
It is evident from \eqref{REDUCED} that the nullspace of $\cD_A$ is spanned by $\hat{\rho}_A$. Therefore, $\cD_A$ is ergodic.  It now follows that for all $Y\in \cB(\cH_A\otimes \cH_B)$,
\begin{equation}
\cD_A\otimes \cI_B(Y) = \hat{\rho}_A \otimes \tr_A[Y] -Y\ .
\end{equation}
Therefore, the generator $\cD$ in \eqref{SimpleEx} has the from $\cD = \cD_A\otimes \cI_B$ where $\cD_A$ has the canonical Lindbladian form.

\bibliographystyle{unsrtnat}
%\bibliography{main.bib}

\end{document}